\DeclarePairedDelimiter\ceil{\lceil}{\rceil}
\newtheorem{remark}{Remark}
\newtheorem{corollary}{Corollary}
\newtheorem{theorem}{Theorem}
\newtheorem{lemma}{Lemma}
\begin{document}

\title{Delay Sensitive Hierarchical Federated Learning with Stochastic Local Updates}

\author{Abdulmoneam~Ali,~\IEEEmembership{Graduate Student Member,~IEEE,}
        and~Ahmed~Arafa,~\IEEEmembership{Member,~IEEE,}
\thanks{The authors are with the Department
of Electrical and Computer Engineering, University of North Carolina at Charlotte,
NC 28223, USA. e-mails: aali28@charlotte.edu, aarafa@charlotte.edu.}
\thanks{This work was supported by the U.S. National Science Foundation under Grants CNS 21-14537 and ECCS 21-46099, and has been presented in part at the Asilomar Conference on Signals, Systems, and Computers, Pacific Grove, CA, USA, October 2022 \cite{ourasilomar}.}}
\date{} 

\maketitle

\vspace{-.75in}
\begin{abstract}
The impact of local averaging on the performance of federated learning (FL) systems is studied in the presence of {\it communication delay} between the clients and the parameter server. To minimize the effect of delay, clients are assigned into different groups, each having its own {\it local} parameter server (LPS) that aggregates its clients' models. The groups' models are then aggregated at a {\it global} parameter server (GPS) that only communicates with the LPSs. Such setting is known as {\it hierarchical} FL (HFL). Unlike most works in the literature, the number of local and global communication rounds in our work is randomly determined by the (different) delays experienced by each group of clients. Specifically, the number of local averaging rounds \textcolor{black}{is tied to} a wall-clock time period coined the {\it sync time} $S$, after which the LPSs synchronize their models by sharing them with the GPS. Such sync time $S$ is then reapplied until a global wall-clock time is exhausted.

First, an upper bound on the deviation between the updated model at each LPS with respect to that available at the GPS is derived. This is then used as a tool to derive the convergence analysis of our proposed delay-sensitive HFL algorithm, first at each LPS \textcolor{black}{individually, and }then at the GPS. Our theoretical convergence bound showcases the effects of the whole system's parameters, including the number of groups, the number of clients per group, and the value of $S$. Our results show that the value of $S$ should be carefully chosen, especially since it implicitly governs how the delay statistics affect the performance of HFL in situations \textcolor{black}{where training time is restricted.}

\end{abstract}

\begin{IEEEkeywords}
Hierarchical federated learning, limited-time training, delay-sensitive learning, random global iterations, random local iterations, convergence analysis.
\end{IEEEkeywords}

\IEEEpeerreviewmaketitle

\section{Introduction}

Federated Learning (FL) is a distributed machine learning training system in which edge devices (clients) collaboratively train a model of interest based on their locally stored datasets. A central node (parameter server) orchestrates the learning process by collecting the clients' parameters for aggregation \cite{pmlr-v54-mcmahan17a}. Due to its data privacy preserving and bandwidth saving nature, FL has attracted a lot of attention and has been used in diverse applications including healthcare and mobile services. 

\noindent\textbf{Challenges and Related Work.} In order to successfully deploy FL in communication networks, lots of challenges should be addressed. These include: the computing capabilities of the clients; the communication overhead between the clients and the parameter server; and the system heterogeneity, whether in the clients' communication channels or their data statistics. 

\textcolor{black}{One way to overcome the resource-constrained capabilities of clients and the limited channel bandwidth is to employ quantization, sparsification, and compression when the size of the learning model is too large}\cite{bouzinis2022wireless, sattler2019robust}. Another challenge often addressed in the literature is related to the limited available spectrum that hinders the simultaneous participation of all clients. For that, client scheduling and its consequences on the system's performance becomes crucial \cite{cho2021client,wang2022a,AoI}.   

Among all the challenges, communication remains to be the bottleneck issue, and various solutions have been proposed in the literature to mitigate it. One of these solutions is to conduct several local updates at the clients' side before communicating with the parameter server \cite{stich2018local,woodworth2020local,lin2018don,pmlr-v130-shokri-ghadikolaei21a}. Another solution is to introduce intermediate parameter servers, denoted as local parameter servers (LPSs), between the clients and the (now) global parameter server (GPS). Such setting of FL is known in the literature as the \textcolor{black}{\it hierarchical} FL (HFL) setting \cite{wang2022demystifying}. The main advantage of having LPSs close to clients is to reduce the latency and \textcolor{black}{energy required} to communicate with the GPS \cite{Hfl_kh}. In \cite{luo2020hfel}, a joint resource allocation and client association problem is formulated
in an HFL setting and then solved by an iterative algorithm. Moreover, the authors in \cite{delay_aware_Nicolo} utilize the bottleneck time at the GPS by allowing the devices to perform extra local updates while they are waiting for the new global model from the GPS. To overcome the dispersion between the weights evaluated based on the latest global model and those at the stale model, the authors propose a linear
global-local model combiner scheme to address this challenge. 

In these mentioned works, the authors analyze their systems while assuming a fixed number of local iterations and global communication rounds. \textcolor{black}{However, in more realistic scenarios, the number of local iterations may vary from one global communication round to another, depending on the dynamic nature of the (wireless) communication channel and the different computational capabilities of the edge devices.} \textcolor{black}{Moreover, the number of global communication rounds can also vary if the training time is restricted.} One scenario in which this is the case is when model training has to be conducted during non-congested periods of the network.

\noindent\textbf{Contributions.}
Motivated by the aforementioned endeavors and to cope with the very low latency service requirements in 6G networks (and beyond), in this paper we focus on HFL for \textcolor{black}{delay-sensitive} communication networks. We study FL settings that have an additional requirement of conducting training within a predefined deadline. Such scenario is relevant for, e.g., energy-limited clients whose availability for long times is not always guaranteed. To enforce the system to abide by this constraint, the number of local training updates will be determined by a wall-clock time. Specifically, we define a \textcolor{black}{\it sync time} $S$ within which the LPSs are allowed to aggregate the parameters they receive from their groups' clients. Each local iteration consumes a random group-specific \textcolor{black}{delay}, and hence the total number of local updates within $S$ will also be random, and could possibly be \textcolor{black}{different} across groups. This dissimilarity in the delay statistics is introduced to capture, e.g., the effects of wireless channels and different computational resources among different group clients. Following the deadline $S$, the LPSs forward their models to the GPS.

We set another time constraint at the GPS denoted the \textcolor{black}{\it system time} $T$. This is the total time allowed for the overall HFL system to perform the training. Different values of $S$ and $T$ will lead to a different number of local and global updates. By controlling $S$, we also control how many times the clients communicate with the GPS: \textcolor{black}{more local iterations lead to fewer global ones and vice versa.} \textcolor{black}{This is  different from the existing works that assume that the global communication rounds are predefined independently from the number of local updates.} Thus, one of the main questions we address in this work is:

\begin{center}
\textcolor{black}{\it When would it be preferable for an HFL system to perform more local iterations than global ones, with local iterations being relatively faster yet leading to possibly lower model accuracy, and vice versa?}
\end{center}

We present a thorough theoretical convergence analysis for the proposed HFL setting for non-convex loss functions. Our results show how the different system parameters affect the accuracy, \textcolor{black}{namely the wall clock times} $S$ and $T$, the number of \textcolor{black}{groups,} and the number of clients per group. Various experiments are performed on different models and datasets to show how to optimize the sync time $S$ based on the other system parameters. Our system model is depicted in Fig.~\ref{fig:sys_mod_hfl}.

Our work relaxes the assumption in \cite{wang2022demystifying} regarding the non-existence of  communication constraints. In particular, different from \cite{wang2022demystifying}, we incorporate the delay induced by communication and computation processing between LPSs and their associated clients in addition to that in between LPSs and the GPS. Moreover, we point out the trade-off between the number of local iterations and global communication rounds, and  propose an approach to balance between them by optimizing the sync time $S$. We  also add another lens when comparing HFL with centralized FL (which is different from the results in \cite{Hfl_kh}). Specifically, depending on the system parameters and delay statistics, we show that there are scenarios in which direct communication with the GPS (centralized FL) can outperform communicating with the LPSs (HFL).

We now summarize our main contributions and novelty:
\begin{itemize}
    \item To the best of our knowledge, this is the first work that considers stochastic local and global iterations in HFL (as opposed to being predefined prior to training); the number of iterations is determined by the induced delay between the LPSs and their associated clients, in addition to that in between the LPSs and the GPS. 
    \item We show that there is an intrinsic trade-off between local and global iterations in delay-sensitive HFL settings; the more time you spend in local iterations, the less you spend in global ones, and vice versa. Hence, we provide insights on how to characterize the effect of such a trade-off on the training accuracy in HFL. This is captured by carefully selecting the proposed sync time $S$ across LPSs.
    \item Our theoretical results characterize the consequences of considering a wall-clock training time, together with random local iterations and global communication rounds, on the training accuracy of non-convex loss functions in HFL settings. 
    \item Instead of relying on similarity and dissimilarity assumptions about the updated models at the LPSs and the GPS (as done in most of the literature), we derive our own bound to capture the divergence between them as a function of the system parameters. Such a bound plays an instrumental role in proving our convergence results and reveals insights about how the system parameters can be designed to alleviate heterogeneity between groups.
    \item In addition to theoretical analyses, we carry out extensive experiments on various datasets and different scenarios to show the effects of choosing the sync time $S$, along with the remaining system parameters, on delay-sensitive HFL settings.
\end{itemize}

\begin{figure}[t]
\centering
\includegraphics[scale=1]{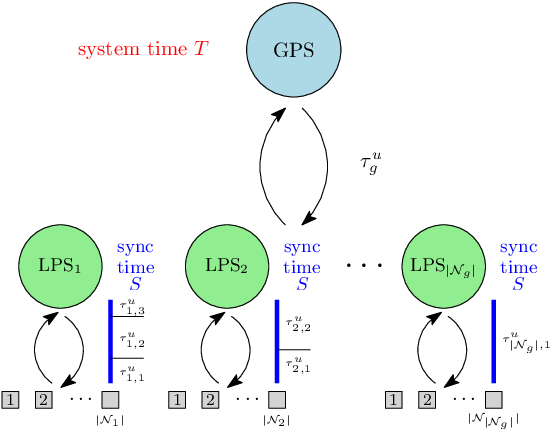}
\caption{System model of delay sensitive HFL.}
\label{fig:sys_mod_hfl}
\end{figure}

\noindent\textbf{Notation and Organization.} $\mathbb{R}$ denotes the real number field; $ \left\|\cdot\right\|$ denotes the Euclidean norm; $\langle x, y \rangle$ denotes the inner product between two vectors $x$ and $y$; $\mathbb{E}$  denotes statistical expectation,  while $\mathbb{E}_{|x}\left\|\cdot\right\|$ represents the conditional expectation given $x$.

The remainder of the paper is organized as follows. Section~\ref{Sys_Model} presents the system model and our proposed HFL algorithm. Theoretical convergence analyses are derived in Section~\ref{main_res}, and verified via extensive simulations in Section~\ref{experiments}. Section~\ref{conclusion} concludes the paper.

\section{System Model}\label{Sys_Model}

We consider an HFL system with a global PS (GPS) and a set of local PSs (LPSs), $\mathcal{N}_g$, that serve a number of clients, see Fig.~\ref{fig:sys_mod_hfl}. Clients are distributed across different LPSs to form clusters (groups), in which a client can only belong to one group, and may only communicate with one specific LPS. 
Denoting by $\mathcal{N}_i$ the set of clients in group $i$, the total number of clients in the system is $\sum_{i \in \mathcal{N}_g}|\mathcal{N}_i|$, with $|\cdot|$ denoting cardinality. Each client has its own dataset, and the data is independently and identically distributed (i.i.d.) among clients. The empirical loss function at the LPS of group $i \in \mathcal{N}_g$ is defined as follows:
\begin{align}
\label{group_loss}
    f_{i}(x)\triangleq\frac{1}{|\mathcal{N}_i|}\sum_{k \in \mathcal{N}_i}F_{i,k}(x), \quad  i \in \mathcal{N}_g,
\end{align}
where $F_{i,k}(x)$ is the loss function at client $k$ in group $i$. The goal of the HFL system is to minimize a global loss function:
\begin{align} \label{global_loss}
    f(x)&\triangleq\frac{1}{\sum_{i \in \mathcal{N}_g} |\mathcal{N}_i|} \sum_{i \in \mathcal{N}_g} |\mathcal{N}_i| f_{i}(x) 
    =\frac{1}{\sum_{i \in \mathcal{N}_g} |\mathcal{N}_i|}\sum_{i \in \mathcal{N}_g} \sum_{k \in \mathcal{N}_i}F_{i,k}(x).
\end{align}

The global loss function is minimized over a number of \textcolor{black}{ global communication rounds} between the GPS and the LPSs. At the beginning of the $u$th global round, the GPS broadcasts the global model, $x^u \in \mathbb{R}^{d} $, with $d$ representing the model dimension, to the LPSs. The LPSs then forward $x^u$ to their associated clients, which is used to run a number of SGD steps based on their own local datasets. After each SGD step, the clients share their models with their LPS, which aggregates them and broadcasts them back locally to its clients. We call this local round trip a \textcolor{black}{ local iteration}. We further illustrate how the global rounds and local iterations interact as follows. Let $x_i^{u,l}$ denote the model available at LPS $i$ after local iteration $l$ during global round $u$, and let $x_{i,k}^{u,l}$ denote the corresponding local model of client $k$ of group $i$. We now have the following equations that build up the models:
\begin{align}
&x_{i}^{u,0}=x^{u},\quad \forall i\in\mathcal{N}_g, \label{eq_lps-model-initial} \\
&x_{i,k}^{u,0}=x_{i}^{u,0},\\
&x_{i,k}^{u,l}=x_{i}^{u,l-1} - \alpha \: \Tilde{g}_{i,k}\left(x_{i}^{u,l-1}\right), \forall k \in \mathcal{N}_{i}, \label{eq_lps-model-itr}
\end{align}
where $\alpha$ is the learning rate, and $\Tilde{g}_{i,k}$ is an unbiased stochastic gradient evaluated at $x_{i}^{u,l-1}$. After the $l$th SGD step, LPS $i$ collects $\left\{x_{i,k}^{u,l}\right\}$ from its associated clients and aggregates them to get the $l$th local model,
\begin{align}
    x_{i}^{u,l} =\frac{1}{|\mathcal{N}_i|} \sum_{k \in \mathcal{N}_i} x_{i,k}^{u,l}, \label{eq_lps-model-agg}
\end{align}
which is shared with its clients to initialize SGD step $l+1$.
Each local iteration takes a \textcolor{black}{\it random} time to be completed. This includes the time for broadcasting the local model by the LPS to its clients, the SGD computation time, and the aggregation time. Let $\tau_{i,l}^u$ denote the wall-clock time elapsed during local iteration $l$ for group $i$ in global round $u$. The time $\tau_{i,l}^u$ encompasses the training, communication, and aggregation time between the LPS and its associated clients. We assume that $\tau_{i,l}^u$'s are i.i.d. across local iterations $l$ and global rounds $u$, but may not be identical across groups $i$. This is motivated by the different channel delay statistics that each group may experience when communicating with its LPS. In addition to that, each group may have clients with heterogeneous computational capabilities. These two factors together hinder one group to (statistically) do an identical number of local updates like other groups. We define a {\it sync time,} $S$, that represents the allowed local training time for {\it all} groups. After the sync time, the LPSs need to report their local models to the GPS, and thereby ending the current global round. During global round $u$, and within the sync time $S$, group $i$ will then conduct a random number of local iterations given by 
\begin{align}\label{num_local_iters}
    t_{i}^{u}\triangleq\min\left\{n:~\sum_{l=1}^{n} \tau_{i,l}^u \geq S\right\},\quad i \in \mathcal{N}_g.
\end{align}
Observe that the statistics of $t_i^u$'s are not identical across groups, see Fig.~\ref{fig_s-protocol-example} for an example sample path during global round $u$.  Based on (\ref{num_local_iters}), it is clear that the value of $S$ determines the number of local iterations that each group conducts.  After the $t_i^u$  \textcolor{black}{local iterations are done}, and using (\ref{eq_lps-model-initial})--(\ref{eq_lps-model-agg}), LPS $i$ will have acquired the following model:
\begin{align} \label{eq_local-update}
        x_{i}^{u,t_{i}^{u}} = x_{i}^{u,0}-\frac{\alpha}{|\mathcal{N}_i|} \sum_{l=0}^{t_{i}^{u}-1}\sum_{k \in \mathcal{N}_i} \Tilde{g}_{i,k}\left(x_{i}^{u,l}\right).
\end{align}
We consider a synchronous setting in which the GPS waits for all the LPSs to finish their local iterations before a global aggregation. Since LPSs incur different wall-clock times to collect their models, some of them may need to stay \textcolor{black}{idle,} waiting for others to finish. The GPS therefore starts aggregating the models after
\begin{align}
\max_{i\in\mathcal{N}_g}\left\{\sum_{l=1}^{t_i^{u}} \tau_{i,l}^{u}\right\}
\end{align}
time units from the start of the local iterations in global round $u$. We denote this \textcolor{black}{period by} the \textcolor{black}{syncing period} (see Fig.~\ref{fig_s-protocol-example}),  which captures the straggler's effect in synchronous FL. When updating the GPS, LPS $i$ sends the difference between its final and initial models divided by the number of its local iterations performed \cite{fedvarp}, \cite{why_divison}, i.e., it sends
\begin{align} \label{local_update}
        \frac{1}{t_i^u}\!\left(\!x_{i}^{u,t_{i}^{u}}\!-\! x_{i}^{u,0}\!\right)\!=\frac{-\alpha}{|\mathcal{N}_i|t_i^u} \sum_{l=0}^{t_{i}^{u}-1}\! \!\sum_{k \in \mathcal{N}_i} \Tilde{g}_{i,k}\left(x_{i}^{u,l}\right),i \in \mathcal{N}_g.
\end{align}
We note that the purpose of diving by $t_i^{u}$ is to avoid \textit{biasing} the \textcolor{black}{global model} and to avoid the objective inconsistency \cite{why_divison}. Moreover, normalization by the number of local iterations forces the aggregated model update to be the result of an \textit{equal} contribution from all groups. To see this, observe that (cf. Assumption~2) 
\begin{align}
 \mathbb{E}_{|\bm t_i^u} \frac{1}{t_i^u}\left(x_{i}^{u,t_{i}^{u}}- x_{i}^{u,0}\right) &=\frac{-\alpha}{|\mathcal{N}_i|t_i^u}\sum_{l=0}^{t_{i}^{u}-1}\sum_{k \in \mathcal{N}_i} \!\!\nabla F_{i,k}\!\left(x_{i}^{u,l}\right) 
 =\frac{-\alpha}{t_i^u}\sum_{l=0}^{t_{i}^{u}-1}   \nabla  f_{i}\left(x_{i}^{u,l}\right), 
\end{align}
where $\mathbb{E}_{|\bm t_i^u} $ \textcolor{black}{denotes the} conditional expectation given the vector ${\bm t}_i^u\triangleq\left\{t_{i}^{u^\prime}\right\}_{u^\prime=1}^{u}$.
\begin{figure}[t]
\centering
\includegraphics[width=0.95\linewidth]{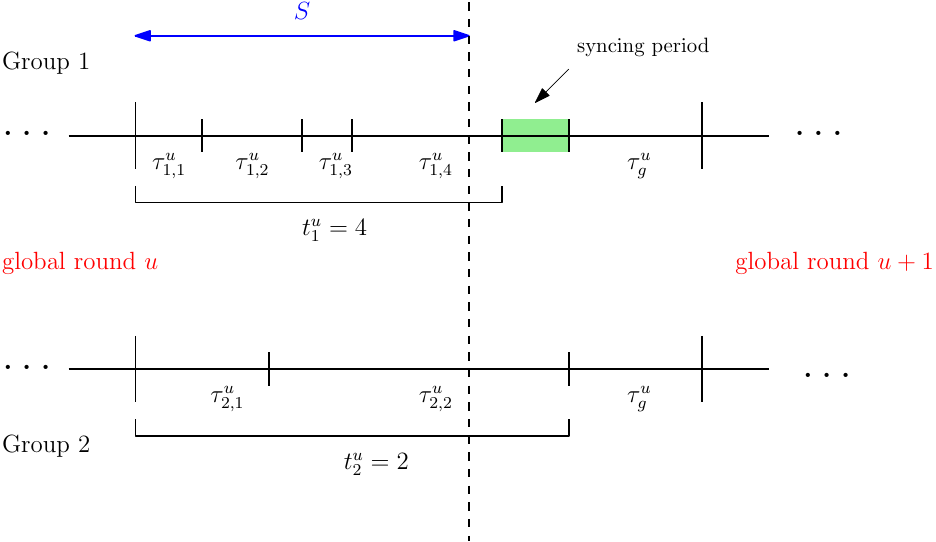}
\caption{Example sample path of global rounds and local iterations of 2 groups with wall-clock times considerations.}
\label{fig_s-protocol-example}
\end{figure}
The GPS then updates its global model as 
\begin{align}\label{global_update}
         x^{u+1}&=x^{u}+\sum_{i \in \mathcal{N}_g}\frac{|\mathcal{N}_i|}{\sum_{i \in \mathcal{N}_g} |\mathcal{N}_i|}\frac{1}{t_{i}^{u}}\left(x_{i}^{u,t_{i}^{u}}- x_{i}^{u,0}\right)
         =x^{u}-\frac{\alpha}{\sum_{i \in \mathcal{N}_g}|\mathcal{N}_i|}\sum_{i \in \mathcal{N}_g} \!\frac{1}{t_{i}^{u}} \sum_{l=0}^{t_{i}^{u}-1}\!\sum_{k \in \mathcal{N}_i}\! \!\Tilde{g}_{i,k}\left(x_{i}^{u,l}\right), 
\end{align}
and broadcasts $x^{u+1}$ to the LPSs to begin global round $u+1$. We assume that the global aggregation and broadcasting processes consume i.i.d. $\tau_g^u$'s wall-clock times. An example of the HFL setting considered is depicted in Fig.~\ref{fig_s-protocol-example}.
The overall HFL training process stops after a total \textcolor{black}{system time} $T$. The value of $T$ represents \textcolor{black}{the time budget allowed for training in applications sensitive to delays.} Within $T$, the total number of global rounds will be given by
\begin{align}\label{num_global_iters}
    \mathcal{U}\triangleq\min\left\{m:~\sum_{u=1}^{m} \left(\max_{i\in\mathcal{N}_g}\left\{\sum_{l=1}^{t_i^{u}}\tau_{i,l}^{u}\right\} + \tau_{g}^{u} \right) \geq T\right\}.
\end{align}
Notably in (\ref{num_global_iters}), the number of global iterations $\mathcal{U}$, will be dominated by the number of local iterations $\{{t_i^{u}}\}$, the delay at the GPS $\tau_{g}^{u}$, and the training time constraint $T$. We coin the proposed algorithm {\it delay sensitive HFL} which is summarized in Algorithm~\ref{alg_main}. We also include a list of the key notations in Table~\ref{tab:key_notation} for better presentation.

In the sequel, we analyze the performance of Algorithm~\ref{alg_main} in terms of the \textcolor{black}{wall-clock times,} number of clients and other parameters of the system. We then discuss how to choose the sync time $S$ to \textcolor{black}{ensure better learning outcomes}.

\begin{table}[h]

\captionsetup{labelfont=normalfont,textfont=normalfont}
\centering
\begin{tabular}{| m{1.3cm} | m{14.3cm}|}
\hline
Symbol & Definition \\
\hline
$\mathcal{N}_g$ & Set of local PSs (LPSs) \\ \hline
$\mathcal{N}_i$ & Set of clients in group $i$ \\ \hline
$t_i^u$ &  Number of local iterations (averaging) 
performed by group $i$ within global round $u$ \\  \hline
$x^u$ & Global model after $u$ global iterations \\ \hline
 $x_{i,k}^{u,l}$ & Local model of client $k$ of group $i$ at local iteration $l$ within global round $u$  \\ \hline
$\tau_{i,l}^u$ & Wall-clock time elapsed during local iteration $l$ for group $i$ in global round $u$ \\ \hline
$\tau_g^{u}$ & Wall-clock time elapsed at the GPS to update the global model in global round $u$ \\ \hline
$S$ & Sync time; allowed local training time for
all groups\\ \hline
$T$ & System time; overall allowed training time \\ \hline
$\mathcal{U}$ & Number of global iterations \\
\hline
\end{tabular}
\caption{Key notations and system parameters.}
\label{tab:key_notation}

\end{table}

\begin{algorithm}[t]
	\caption{Delay Sensitive HFL} 
	\begin{algorithmic}[1]
\State \textbf{Input:} learning rate $\alpha$, system time $T$, sync time $S$
\State \textbf{Output:} global aggregated model $x^{\mathcal{U}}$
\State \textbf{Initialization:} $\Bar{T},u \gets 0$
\While {$\Bar{T} \leq T$}
		\State \underline{Global Broadcast:} $x_i^{u,0} \gets x^{u}, \: \forall i \in \mathcal{N}_g$
		\For {$i \in \mathcal{N}_g$}
				\State $t_i^{u}\gets 0 , \Bar{t} \gets 0$
		\While {$\Bar{t} \leq S$}
			\For {$k \in \mathcal{N}_i$}
				\State \underline{SGD Update:} $x_{i,k}^{u,l} \!=x_{i}^{u,l-1} - \!\alpha  \Tilde{g}_{i,k}(x_{i}^{u,l-1})$
			\EndFor
			\State \underline{Local Aggregation:}     $x_{i}^{u,l} =\frac{1}{|\mathcal{N}_i|} \sum_{k \in \mathcal{N}_i} x_{i,k}^{u,l} $
			\State \underline{Group Broadcast:} $x_{i,k}^{u,l}=x_{i}^{u,l}$
			\State \underline{Local updates increment:} $t_i^{u} \gets t_i^{u}+1$, \\
                                                \hspace{.625in}$\Bar{t} \gets \Bar{t}+\tau_{i}^{u}$
			\EndWhile
			\State \underline{Upload:} $\frac{1}{t_{i}^{u}}(x_{i}^{u,t_{i}^{u}}- x_{i}^{u,0})$
		\EndFor
		\State \underline{Global Update:} \\
            \hspace{.25in}$x^{u+1}=x^{u}+\sum_{i \in \mathcal{N}_g}\frac{|\mathcal{N}_i|}{\sum_{j \in \mathcal{N}_g} |\mathcal{N}_j|}\frac{1}{t_{i}^{u}}(x_{i}^{u,t_{i}^{u}}- x_{i}^{u,0})$ 

		\State \underline{System Time Update:} $\Bar{T} \gets \max_{i}\{\sum_{j=1}^{(t_i^{u})} \tau_{i,j}^{u}\}+ \tau_{g} $,\\
        \hspace{.25in}$u \gets u+1$
		\EndWhile
	\end{algorithmic} \label{alg_main}
\end{algorithm}

\section{Main Results}\label{main_res}
In this section, we present the convergence analysis for the proposed HFL setting. We have the following typical assumptions about the loss function and SGD \cite{Hfl_kh}:

\noindent\textbf{Assumption 1.} (\textit{Smoothness}). Loss functions are  $L$-smooth: $\exists L > 0$ such that $\forall x,y \in \mathbb{R}^d$: 
\begin{align} \label{assum_1}
    \hspace{-0.5cm}F_{i,k}(y) \leq F_{i,k}(x)+\langle \nabla F_{i,k}(x),y-x \rangle+\frac{L}{2} \left\| y-x\right\|^2,\forall i,k.
\end{align}
\textbf{Assumption 2.} (\textit{Unbiased Gradient}). The gradient estimate at each client satisfies
\begin{align} \label{assum_2}
    \mathbb{E}\Tilde{g}_{i,k}(x)= \nabla F_{i,k}\left(x\right),\quad \forall i,k.
\end{align} 
 \textbf{Assumption 3.} (\textit{Bounded Gradient}). There exists a constant $G >0 $ such that the stochastic gradient's second moment is bounded as
\begin{align}\label{assum_3}
\mathbb{E}\left\|\Tilde{g}_{i,k}(x)\right\|^2 \leq G^2,\quad \forall i,k.
\end{align}
  \textbf{Assumption 4.} (\textit{Bounded Variance}). There exists a constant $\sigma >0 $, such that the variance of the stochastic gradient is bounded as
\begin{align} \label{assum_4}
    \mathbb{E}\left\| \Tilde{g}_{i,k}(x)-\nabla F_{i,k}(x)\right\|^2 \leq \sigma^2,\quad \forall i,k.
\end{align}

It is worth noting that we conduct our analysis {\it without} assuming convexity of the loss function at any entity in the system. According to our proposed algorithm, after each global round, the group clients will resume their local training from the aggregated global model instead of the their latest local one. Hence, we need to quantify the {\it deviation} between the two parameter models through the following lemma (the proof is in Appendix~\ref{appB}):
\begin{lemma} \label{lemma_1}
For $0 \leq \alpha \leq \frac{1}{L}$, the delay sensitive HFL algorithm satisfies the following $\forall u, i$:
\begin{align} 
       \mathbb{E}_{|\bm t_i^u} \!\left\| x^{u+1}-x_{i}^{u,t_{i}^{u}}\right\|^2 \!\!\leq 2\alpha^2 \left( \left( \!t_{i}^{u}\right)^2\!+ \frac{|\mathcal{N}_g|\sum_{j \in \mathcal{N}_g }|\mathcal{N}_j|^2}{(\sum_{i \in \mathcal{N}_g}|\mathcal{N}_i|)^2} \!\right)\!G^2. \label{eq_divergence}
\end{align}

\end{lemma}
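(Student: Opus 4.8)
The plan is to expand the discrepancy directly from the model-update equations. Since $x_i^{u,0}=x^u$, combining \eqref{eq_local-update} with \eqref{global_update} gives
\begin{align*}
x^{u+1}-x_i^{u,t_i^u}&=\frac{\alpha}{|\mathcal{N}_i|}\sum_{l=0}^{t_i^u-1}\sum_{k\in\mathcal{N}_i}\Tilde{g}_{i,k}\!\left(x_i^{u,l}\right)\\
&\quad-\frac{\alpha}{\sum_{j\in\mathcal{N}_g}|\mathcal{N}_j|}\sum_{j\in\mathcal{N}_g}\frac{1}{t_j^u}\sum_{l=0}^{t_j^u-1}\sum_{k\in\mathcal{N}_j}\Tilde{g}_{j,k}\!\left(x_j^{u,l}\right),
\end{align*}
i.e.\ the difference between group $i$'s own accumulated local step and the accumulated global step. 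Applying $\|a-b\|^2\le 2\|a\|^2+2\|b\|^2$ reduces the claim to separately bounding the $\mathbb{E}_{|\bm t_i^u}$ second moments of these two accumulated-gradient terms, each scaled by $2\alpha^2$.

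For the first term, conditioning on $\bm t_i^u$ fixes the number of summands at $t_i^u|\mathcal{N}_i|$, so Jensen's inequality gives
\begin{align*}
\left\|\frac{1}{|\mathcal{N}_i|}\sum_{l=0}^{t_i^u-1}\sum_{k\in\mathcal{N}_i}\Tilde{g}_{i,k}\!\left(x_i^{u,l}\right)\right\|^2\le\frac{t_i^u}{|\mathcal{N}_i|}\sum_{l=0}^{t_i^u-1}\sum_{k\in\mathcal{N}_i}\left\|\Tilde{g}_{i,k}\!\left(x_i^{u,l}\right)\right\|^2 ,
\end{align*}
and taking $\mathbb{E}_{|\bm t_i^u}$ together with Assumption~3 (which applies termwise, since $\mathbb{E}\|\Tilde{g}_{i,k}(x)\|^2\le G^2$ holds for every argument, including the random iterate $x_i^{u,l}$, via the tower property) produces the bound $\left(t_i^u\right)^2G^2$.

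For the second term I apply Jensen twice: first over the $|\mathcal{N}_g|$ groups, incurring a factor $|\mathcal{N}_g|$; then, inside each group $j$, over the $t_j^u|\mathcal{N}_j|$ summands, incurring a factor $t_j^u|\mathcal{N}_j|$. After taking expectations and invoking Assumption~3, group $j$ contributes at most $\dfrac{t_j^u|\mathcal{N}_j|\cdot t_j^u|\mathcal{N}_j|\,G^2}{(t_j^u)^2}=|\mathcal{N}_j|^2G^2$, so the second term is at most $\dfrac{|\mathcal{N}_g|}{\big(\sum_{j\in\mathcal{N}_g}|\mathcal{N}_j|\big)^2}\sum_{j\in\mathcal{N}_g}|\mathcal{N}_j|^2\,G^2$. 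Adding the two $2\alpha^2$-scaled contributions yields \eqref{eq_divergence}. (The hypothesis $\alpha\le 1/L$ is not actually needed here; it is carried along for uniformity with the convergence results that do use it.)

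The one point requiring care is that $\mathbb{E}_{|\bm t_i^u}$ conditions only on group $i$'s iteration counts, so for $j\ne i$ the random count $t_j^u$ appears both inside the norm and in the prefactor $1/(t_j^u)^2$. The clean fix is to condition additionally on all the $t_j^u$'s, bound $\mathbb{E}\big[\|\sum_l\sum_{k\in\mathcal{N}_j}\Tilde{g}_{j,k}(x_j^{u,l})\|^2\,\big|\,\text{all }t\text{'s}\big]\le(t_j^u)^2|\mathcal{N}_j|^2G^2$ as above, divide by $(t_j^u)^2$ to obtain the deterministic bound $|\mathcal{N}_j|^2G^2$, and only then average back down to the conditioning on $\bm t_i^u$ --- since this bound no longer depends on any $t_j^u$, it passes through unchanged. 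Everything else is routine bookkeeping with Jensen's inequality and Assumption~3.
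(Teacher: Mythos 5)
Your proposal is correct and follows essentially the same route as the paper's own proof: the same decomposition of $x^{u+1}-x_i^{u,t_i^u}$ into the two accumulated-gradient terms, the same $\|a-b\|^2\le 2\|a\|^2+2\|b\|^2$ split, and the same repeated Jensen plus Assumption~3 bookkeeping yielding $(t_i^u)^2G^2$ and $|\mathcal{N}_j|^2G^2$ per group. Your closing remark about the conditioning on $\bm t_i^u$ versus the randomness of $t_j^u$ for $j\neq i$ is a genuine refinement --- the paper's proof carries $1/(t_j^u)^2$ outside $\mathbb{E}_{|\bm t_i^u}$ without comment, and your tower-property fix (bounding under full conditioning to get the $t$-free constant $|\mathcal{N}_j|^2G^2$ before averaging back) is the clean way to make that step rigorous; you are also right that the hypothesis $\alpha\le 1/L$ is not used in this lemma.
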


\begin{remark}
The first term in the bound in Lemma~\ref{lemma_1} represents the contribution of group $i$ while the second one reflects the impact of all groups in the deviation between the parameter models. It is obvious that more local iterations lead to more deviation between the local and the global models. Note that local iterations are the sole determinant of the deviation in case of having one group only (e.g., when there is no hierarchy); having two or more groups carries an additive effect on the deviation as seen in the second term. 
\end{remark}
\begin{remark}
In case of having only one group in the system, one gets a strictly smaller upper bound than that in \cite{yu2019parallel}, which is given by $4\alpha^2 \left(t_{i}^{u}\right)^2 G^2$ (almost two times the bound in (\ref{eq_divergence}) for $|\mathcal{N}_g|=1$ for large values of $t_i^u$).
\end{remark}
\vspace{-.3in}
\begin{remark}
The lemma suggests a way to overcome the effects of the heterogeneity of the number of local iterations conducted by each group on the divergence. By setting a different learning rate for each group that is inversely proportional to the number of local iterations (e.g., $\alpha^2 \leq \frac{1}{2 ((t_i^u)^2+|\mathcal{N}_g|) G^2 }$ for group $i$), one can achieve the same divergence bound ($\leq1$) across all groups.
\end{remark}

Lemma~\ref{lemma_1} serves as a building block for our main convergence theorems of the proposed delay sensitive HFL. These are mentioned next (with proofs in Appendices \ref{appC} and \ref{appD}).
\begin{theorem}[\textbf{Convergence  Analysis per Group}]
\label{CA_Group}
For $0 \leq \alpha \leq \frac{1}{L}$, the delay sensitive HFL algorithm achieves the following group $i$ bound for a given $\mathcal{U}$:
\begin{align}
\frac{1}{\sum_{u=1}^{\mathcal{U}} t_{i}^{u}} \sum_{u=1}^{\mathcal{U}}  &\sum_{l=1}^{t_{i}^{u}} \mathbb{E}_{|\bm t_i^u}\left\|\nabla f_{i}(x_{i}^{u,l-1})\right\|^2 
     \leq \frac{2}{\alpha \sum_{u=1}^{\mathcal{U}} t_{i}^{u}} \Biggl(\mathbb{E}_{|\bm t_i^u}f_{i}\left(x_{i}^{1}\right)-\mathbb{E}_{|\bm t_i^u}f_{i}\left(x_{i}^{\mathcal{U},t_i^{\mathcal{U}}}\right) \Biggl)+ \frac{\alpha  L \sigma^2}{|\mathcal{N}_{i}|}  \nonumber \\
    &+\Biggl(\frac{1}{\alpha \sum_{u=1}^{\mathcal{U}} t_{i}^{u}} +  \frac{2 (L +1) \kappa \alpha}{ \sum_{u=1}^{\mathcal{U}} t_{i}^{u}} \Biggr)(\mathcal{U}-1)  G^2 +\frac{2 (L +1)\alpha}{ \sum_{u=1}^{\mathcal{U}} t_{i}^{u}}   \sum_{u=1}^{\mathcal{U}-1}   ( t_{i}^{u})^2 G^2,
\end{align}

where the term $\kappa$ is given by
$
\kappa\triangleq\frac{|\mathcal{N}_g|}{\left(\sum_{i \in \mathcal{N}_g}|\mathcal{N}_i|\right)^2} \sum_{j \in \mathcal{N}_g} |\mathcal{N}_j|^2.$

\end{theorem}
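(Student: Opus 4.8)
The plan is to run the standard descent-lemma argument for SGD with local updates, but carefully tracking the re-initialization that happens at the start of each global round (where clients restart from $x^u$ rather than from their last local model $x_i^{u-1,t_i^{u-1}}$). First I would apply Assumption~1 ($L$-smoothness of $F_{i,k}$, hence of $f_i$) between consecutive local iterates $x_i^{u,l-1}$ and $x_i^{u,l}$ within a fixed global round $u$. Using the aggregation identity \eqref{eq_lps-model-agg} together with \eqref{eq_lps-model-itr}, the one-step update of the LPS model is $x_i^{u,l} = x_i^{u,l-1} - \frac{\alpha}{|\mathcal{N}_i|}\sum_{k\in\mathcal{N}_i}\tilde g_{i,k}(x_i^{u,l-1})$. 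Plugging this into the smoothness inequality, taking conditional expectation $\mathbb{E}_{|\bm t_i^u}$, and using Assumption~2 (unbiasedness) to identify the inner-product term with $-\alpha\|\nabla f_i(x_i^{u,l-1})\|^2$ plus a cross term that collapses to the same quantity, and Assumption~4 (bounded variance, together with the i.i.d. structure across clients so the variance of the average scales like $\sigma^2/|\mathcal{N}_i|$) to bound the quadratic term, yields a per-step inequality of the form
\begin{align}
\mathbb{E}_{|\bm t_i^u} f_i(x_i^{u,l}) \le \mathbb{E}_{|\bm t_i^u} f_i(x_i^{u,l-1}) - \frac{\alpha}{2}\left(1 - \alpha L\right)\mathbb{E}_{|\bm t_i^u}\|\nabla f_i(x_i^{u,l-1})\|^2 + \frac{\alpha^2 L \sigma^2}{2|\mathcal{N}_i|},
\end{align}
and the condition $\alpha \le 1/L$ guarantees $1-\alpha L \ge 0$, so the coefficient of the gradient norm is at least $\alpha/2$ (I would likely just use $\frac12$ and absorb the slack, or carry $1-\alpha L$; either is fine).

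Next I would sum this per-step bound over $l = 1,\dots,t_i^u$ within each global round, and then telescope across rounds $u = 1,\dots,\mathcal{U}$. The telescoping is the step that requires care: within round $u$ the chain closes because $x_i^{u,l}$ evolves continuously, but between round $u$ and round $u+1$ the LPS model jumps from $x_i^{u,t_i^u}$ to $x_i^{u+1,0} = x^{u+1}$. So the telescoped sum leaves a residual $\sum_{u=1}^{\mathcal{U}-1}\left(\mathbb{E}_{|\bm t_i^u} f_i(x^{u+1}) - \mathbb{E}_{|\bm t_i^u} f_i(x_i^{u,t_i^u})\right)$ that must be controlled. I would bound each such difference again via $L$-smoothness: $f_i(x^{u+1}) - f_i(x_i^{u,t_i^u}) \le \langle \nabla f_i(x_i^{u,t_i^u}), x^{u+1} - x_i^{u,t_i^u}\rangle + \frac{L}{2}\|x^{u+1} - x_i^{u,t_i^u}\|^2$. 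The linear term is handled with Young's inequality plus Assumption~3 to turn $\|\nabla f_i\|\cdot\|x^{u+1}-x_i^{u,t_i^u}\|$ into $\frac12\|x^{u+1}-x_i^{u,t_i^u}\|^2 + \frac12 G^2$ (this is where the extra ``$+1$'' in the $(L+1)$ factors comes from), and the quadratic deviation term $\|x^{u+1}-x_i^{u,t_i^u}\|^2$ is exactly what Lemma~\ref{lemma_1} bounds — substituting \eqref{eq_divergence} produces the $2\alpha^2((t_i^u)^2 + \kappa)G^2$ contribution, which after summing over $u$ gives precisely the $(\mathcal{U}-1)\kappa G^2$ term and the $\sum_{u=1}^{\mathcal{U}-1}(t_i^u)^2 G^2$ term in the theorem.

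Finally I would collect terms: the telescoped function-value differences give $\mathbb{E}_{|\bm t_i^u} f_i(x_i^1) - \mathbb{E}_{|\bm t_i^u} f_i(x_i^{\mathcal{U},t_i^{\mathcal{U}}})$, divide through by $\frac{\alpha}{2}\sum_{u=1}^{\mathcal{U}} t_i^u$ (the total number of gradient evaluations across all rounds for group $i$), and rearrange to isolate the average squared gradient norm on the left. Matching the factor $\frac{1}{\alpha \sum_u t_i^u}$ out front, the $\frac{\alpha L \sigma^2}{|\mathcal{N}_i|}$ variance term, and the two $G^2$ terms reproduces the claimed bound. The main obstacle is the cross-round bookkeeping — correctly accounting for the re-initialization jump and making sure the deviation from Lemma~\ref{lemma_1} is inserted with the right multiplicities and the right $(L+1)$ coefficients; the within-round descent and the final normalization are routine. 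One should also double-check that all the steps survive under the conditional expectation $\mathbb{E}_{|\bm t_i^u}$, i.e., that conditioning on the realized iteration counts $\bm t_i^u$ does not interfere with the unbiasedness and variance bounds, which holds because $\tau_{i,l}^u$ (hence $t_i^u$) and the stochastic gradient noise are independent.
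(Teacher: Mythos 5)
Your proposal follows essentially the same route as the paper's proof: the per-local-step descent lemma with unbiasedness and the $\sigma^2/|\mathcal{N}_i|$ variance reduction, telescoping within and across global rounds, and bounding the re-initialization residual $\mathbb{E}_{|\bm t_i^u}f_i(x^{u+1})-\mathbb{E}_{|\bm t_i^u}f_i(x_i^{u,t_i^u})$ via smoothness, Young's inequality with the $G^2$ gradient bound (the source of the $L+1$), and Lemma~\ref{lemma_1}, before normalizing by $\tfrac{\alpha}{2}\sum_u t_i^u$. The only quibble is your stated one-step coefficient: the correct bookkeeping gives $-\alpha\bigl(1-\tfrac{\alpha L}{2}\bigr)\le -\tfrac{\alpha}{2}$ rather than $-\tfrac{\alpha}{2}(1-\alpha L)$ (which would vanish at $\alpha=1/L$), but this is immaterial to the argument, which otherwise matches the paper.
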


\begin{remark} \label{iso_grp_convg}
  Notably, setting $\mathcal{U}=1$ means that the groups will work \textit{individually}. The result of Theorem~\ref{CA_Group} shows that convergence is still guaranteed in this isolated case by choosing $0\leq\alpha\leq\min\left\{\frac{1}{L},\frac{1}{\sqrt{t_i^1}}\right\}$.  
\end{remark}
\begin{theorem}[\textbf{Global Convergence  Analysis} ]
\label{global_convg}
For $0 \leq \alpha \leq \frac{1}{L}$, the delay sensitive HFL algorithm achieves the following global bound for a given $\mathcal{U}$:

\begin{align} \label{eq_global_bound} 
&\frac{1}{\mathcal{U}} \sum_{u=1}^{\mathcal{U}} \mathbb{E}_{|\bm t_i^u}\left\|\nabla f\left(x^{u}\right)\right\|^2  \!\leq\!  \frac{2}{\alpha} \frac{1}{\mathcal{U}}  \left(\mathbb{E}_{|\bm t_i^u}f\left(x^{1}\right)\!-\!\mathbb{E}_{|\bm t_i^u}f\left(x^{\mathcal{U}+1}\right)\right)  
+ \frac{\alpha L |\mathcal{N}_g| \sum_{i \in \mathcal{N}_g} |\mathcal{N}_i|^2\sigma^2}{\left(\sum_{i \in \mathcal{N}_g}|\mathcal{N}_i|\right)^2}\nonumber \\
&\hspace{.5in}+ \frac{1}{\mathcal{U}} \sum_{u=1}^{\mathcal{U}}  \frac{12\alpha^2  L^2 |\mathcal{N}_{g}|}{\left(\sum_{i \in \mathcal{N}_g}|\mathcal{N}_i|\right)^2} \sum_{i \in \mathcal{N}_g} |\mathcal{N}_{i}|^2   \left(t_{i}^{u-1}\right)^2 +\frac{12 \alpha^2 L^2 G^2 |\mathcal{N}_{g}|^2}{\left(\sum_{i \in \mathcal{N}_g}|\mathcal{N}_i|\right)^4}   \left(\sum_{i \in \mathcal{N}_g} |\mathcal{N}_{i}|^2\right)^2 
 \nonumber \\
&\hspace{.5in}+ \frac{1}{\mathcal{U}} \sum_{u=1}^{\mathcal{U}}\frac{4 \alpha^2 L^2  G^2 |\mathcal{N}_{g}|}{\left(\sum_{i \in \mathcal{N}_g}|\mathcal{N}_i|\right)^2} \sum_{i \in \mathcal{N}_g} |\mathcal{N}_{i}|^2 \frac{1}{t_i^u} \sum_{l=0}^{t_i^{u}-1}  l^2 .
\end{align}

\end{theorem}

\begin{remark}
The significance of controlling $S$ is reflected in the values of $\{t_{i}^{u}\}$ and $\mathcal{U}$: increasing $S$ allows more local iterations, $\{t_i^{u}\}$, between the LPSs and their associated clients, at the expense of performing less global communication rounds, $\mathcal{U}$, as a consequence of the system time $T$ being limited, and vice versa. The value of $S$ affects the convergence bound in (\ref{eq_global_bound}), since all the terms except the second one are functions of $t_{i}^{u}$ and $\mathcal{U}$.
\end{remark}
\begin{remark}
Since the terms including $\{t_{i}^{u}\}$ and $\mathcal{U}$ are in the form: $\sum_{u=1}^{\mathcal{U}}t_{i}^{u}$, this gives us an insight that one can tune $S$ to obtain similar performances under different individual values of $\{t_{i}^{u}\}$ and $\mathcal{U}$. This is discussed in more detail in our experimental results.
\end{remark}
\begin{remark}
Although we obtain a convergence bound that is a function of $S$, reflected in  $\{t_i^{u}\}$ and $\mathcal{U}$, its optimal value is not directly obtainable analytically from the bound. In our experiments, we show how different system scenarios lead to different choices of $S$ to optimize the performance.
\end{remark}
Now that we have seen how the sync time $S$ controls the upper bounds in the theorems above by statistically controlling the number of local iterations, and in order to characterize the convergence rate, let us assume that there exists a {\it minimum} local iteration time for group $i$:
\begin{align} \label{t_bound1}
\tau_{i,l}^u\geq c_i,~\text{a.s.},~\forall l,u.
\end{align}
Then, one gets a {\it maximum} number of local iterations:
\begin{align} \label{t_bound2}
t_{i}^{u} \leq t_{i}^{\max}\triangleq\ceil*{\frac{S}{c_i}},~\text{a.s.},~\forall u.
\end{align}
Based on this, one can get the following convergence rate (the proof is in Appendix~\ref{appE}):
\begin{corollary} [\textit{Speed of Convergence}]
\label{corollary}
For a given $\{t_i^{\max}\}$, setting $\alpha =\min\{\frac{1}{\sqrt{\mathcal{U}}},\frac{1}{L}\}$, the delay sensitive HFL algorithm achieves 
$   \frac{1}{\mathcal{U}} \sum_{u=1}^{\mathcal{U}} \mathbb{E}\left\|\nabla f\left(x^{u}\right)\right\|^2\leq\mathcal{O}(\frac{1}{\sqrt{\mathcal{U}}})$.
\end{corollary}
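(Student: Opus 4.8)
The plan is to treat the Corollary as essentially a corollary of Theorem~\ref{global_convg}: start from the bound \eqref{global_bound}, eliminate every data-dependent quantity $t_i^u$ using the deterministic cap $t_i^{\max}=\ceil{S/c_i}$ supplied by \eqref{t_bound2}, take total expectation, and then substitute $\alpha=\min\{1/\sqrt{\mathcal{U}},1/L\}$ and read off the rate term by term. The crucial structural observation is that $t_i^{\max}$ depends only on $S$ and $c_i$, hence is independent of $\mathcal{U}$; so once all $t_i^u$'s are replaced by $t_i^{\max}$, the only $\mathcal{U}$-dependence left in the right-hand side sits in the explicit $\mathcal{U}$, $\alpha$, and $\alpha^2$ factors.

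Concretely, I would first apply \eqref{t_bound2} inside \eqref{global_bound}: bound $\left(t_i^{u-1}\right)^2\le\left(t_i^{\max}\right)^2$ a.s., and bound the averaged quantity $\tfrac1{t_i^u}\sum_{l=0}^{t_i^{u}-1} l^2\le\tfrac1{t_i^u}\,t_i^u\left(t_i^{u}-1\right)^2\le\left(t_i^{\max}\right)^2$ a.s.; since these hold almost surely, taking total expectation $\mathbb{E}$ preserves them and converts $\mathbb{E}_{|\bm t_i^u}$ into $\mathbb{E}$. I would also replace $\mathbb{E}_{|\bm t_i^u}f(x^{\mathcal{U}+1})$ by $f^\star\triangleq\inf_x f(x)$, so the telescoping numerator is at most the finite, $\mathcal{U}$-independent initial suboptimality $f(x^1)-f^\star$. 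After this step, \eqref{global_bound} reads (in expectation) as a sum of five terms, each of the form $C_1/(\alpha\mathcal{U})$, $C_2\alpha$, $C_3\alpha^2$, $C_4\alpha^2$, $C_5\alpha^2$, with all $C_j$ independent of $\mathcal{U}$.

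Finally I would plug in $\alpha=\min\{1/\sqrt{\mathcal{U}},1/L\}$, using $1/\alpha=\max\{\sqrt{\mathcal{U}},L\}\le\sqrt{\mathcal{U}}+L$, $\alpha\le 1/\sqrt{\mathcal{U}}$, and $\alpha^2\le 1/\mathcal{U}\le 1/\sqrt{\mathcal{U}}$. The first term becomes $\tfrac{2}{\alpha\mathcal{U}}\bigl(f(x^1)-f^\star\bigr)\le 2(1+L)\bigl(f(x^1)-f^\star\bigr)/\sqrt{\mathcal{U}}$, the $\sigma^2$ term is $\mathcal{O}(\alpha)=\mathcal{O}(1/\sqrt{\mathcal{U}})$, and each of the three remaining terms is a constant times $\alpha^2\le 1/\sqrt{\mathcal{U}}$; summing gives $\frac{1}{\mathcal{U}}\sum_{u=1}^{\mathcal{U}}\mathbb{E}\|\nabla f(x^u)\|^2\le\mathcal{O}(1/\sqrt{\mathcal{U}})$. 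I do not expect a genuine obstacle: the analytic content is entirely in Theorem~\ref{global_convg}, and what remains is bookkeeping. The two points needing a little care are (i) verifying that the averaged term $\tfrac1{t_i^u}\sum_{l=0}^{t_i^u-1}l^2$ is bounded by a $\mathcal{U}$-independent constant after expectation — which is exactly where the almost-sure cap $t_i^u\le t_i^{\max}$ is used — and (ii) the regime $\mathcal{U}<L^2$ where $\alpha=1/L$: there $1/\sqrt{\mathcal{U}}\ge 1/L$, so the constant bound obtained from Theorem~\ref{global_convg} with $\alpha=1/L$ is itself $\mathcal{O}(1/\sqrt{\mathcal{U}})$, and the stated rate still holds.
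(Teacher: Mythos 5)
Your proposal is correct and follows essentially the same route as the paper's Appendix~\ref{appE}: apply the almost-sure cap $t_i^u\le t_i^{\max}$ from \eqref{t_bound2} to every $t$-dependent term in the bound of Theorem~\ref{global_convg}, then substitute $\alpha=\min\{1/\sqrt{\mathcal{U}},1/L\}$ and observe that each term is $\mathcal{O}(1/\sqrt{\mathcal{U}})$. The only (immaterial) differences are that the paper evaluates $\frac{1}{t_i^u}\sum_{l=0}^{t_i^u-1}l^2$ exactly as $\frac{(t_i^u-1)(2t_i^u-1)}{6}$ before bounding it by $(t_i^{\max})^2/3$ where you use the cruder bound $(t_i^{\max})^2$, and that you are slightly more careful than the paper in handling the $\alpha=1/L$ regime and in replacing $f(x^{\mathcal{U}+1})$ by $\inf_x f(x)$.
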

Therefore, for a finite sync time $S$, as the training time $T$ increases, the number of the global communication rounds $\mathcal{U}$ also increases, and hence Corollary~\ref{corollary} shows that the gradient converges to $0$ sublinearly.

\section{Experiments}\label{experiments}
In this section, we present our simulation results for the proposed delay sensitive HFL algorithm to verify the findings from the theoretical analysis.

\noindent \textbf{Datasets and Models.} We adopt a variety of data sets to demonstrate our findings and emphasize the effectiveness of our proposed algorithm. We consider an image classification supervised learning task on the CIFAR-10 dataset \cite{krizhevsky2009learning}, MNIST dataset and the Federated Extended MNIST dataset \cite{fedml_paper}. We also consider a word prediction task using the Shakespeare dataset \cite{fedml_paper}. 
We employ different machine learning and deep learning models: (1) a convolution neural network (CNN) is adopted with two 5x5 convolution layers, two 2x2 max pooling layers,  two fully connected layers with 120 and 84 units, respectively, ReLu activation, a final softmax output layer and cross entropy loss; (2) a multi-layer perceptron (MLP) with one hidden layer consisting of 50 neurons, with a logistic regression model; and (3) for the Shakespeare dataset, we train an recurrent neural network (RNN) for next-character-prediction \cite{adaptive_opt}.

\noindent \textbf{Data Partitioning.} In order to validate our proposed HFL algorithm, we test its performance under different data partitioning methods. These are a baseline data partitioning method, where data is distributed in an i.i.d. manner across clients, and another method, where data is distributed in a non-i.i.d. manner with various levels of data heterogeneity. The data partitioning is conducted based on the Dirichlet distribution and it follows the same data partitioning as in \cite{fedml_paper}.

\noindent\textbf{Linear Delay Model.}  We consider shifted exponential delays \cite{shiftedexp1,delay_model}: $\tau_{i,l}^u\sim\exp(c_i,\lambda_i)$ and $\tau_g^u\sim\exp(c_g,\lambda_g)$. That is, $\mathbb{E}[\tau^u_{i,l}]=c_i+\frac{1}{\lambda_i}$ and $\mathbb{E}[\tau^u_g]=c_g+\frac{1}{\lambda_g}$.
To capture the relation between the number of clients and the number of groups with the delay, we  model the shift parameter for group $i$ as a linearly increasing function in the number of clients in that group: $c_i= d \times |\mathcal{N}_i| + b, \: \forall i \in \mathcal{N}_g$, for some constants $d$ and $b$. Similarly, we model the exponential rate as $1/\lambda_i=e \times |\mathcal{N}_i| + f, \: \forall i \in \mathcal{N}_g$. Thus, the intra-group delay increases as the number of its  clients increases. We adopt a similar model for the delay between LPSs and the GPS: $c_g= d_g \times |\mathcal{N}_g| + b_g $ and $1/\lambda_g=e_g \times |\mathcal{N}_g| + f_g$. Thus, the more LPSs a system has, the more intense the communication bottleneck becomes while communicating with the GPS. We will state the delay model values in this order: $\{d ,b ,e, f, d_g, b_g, e_g, f_g\}$.

\subsection{HFL Incentive and Motivation}

We first start with an HFL setting with $|\mathcal{N}_g|=2$ groups, with $10$ clients per group training the CNN model described above. In Fig.~\ref{1}, we show the evolution of both groups' accuracies and the global accuracy across time. The zoomed-in version in Fig.~\ref{fig:1b} shows the high (SGD) variance in the performance of the two groups especially during the earlier phase of training. Then, after more averaging with the GPS, the variance is reduced.
\begin{figure*}[htp] 
    \vspace{-.3in}
    \centering
    \subfloat[Performance over the whole training time.]{%
        \includegraphics[width=0.5\linewidth]{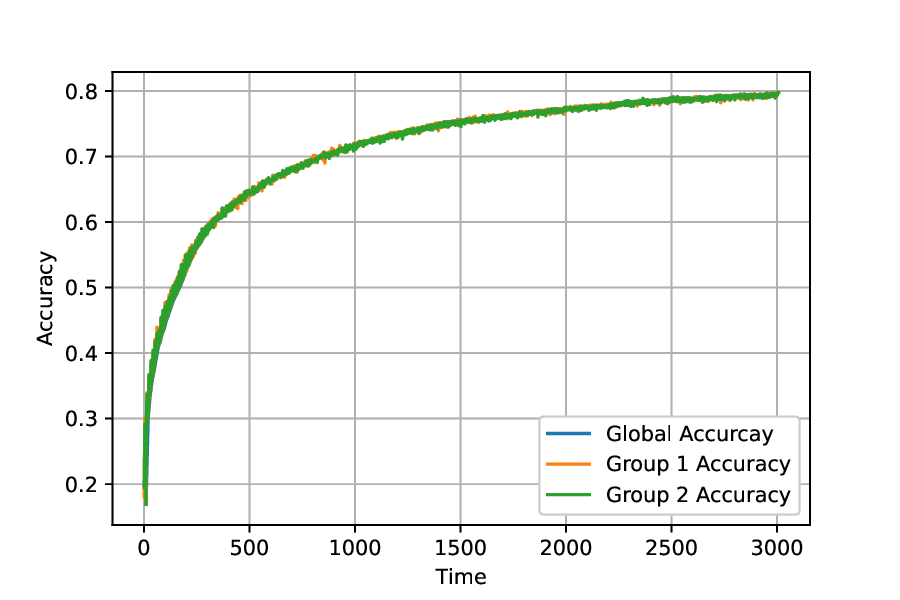}%
        \label{fig:1a}%
        }%
    \hfill%
    \subfloat[Performance at the beginning of training.]{%
        \includegraphics[width=0.5\linewidth]{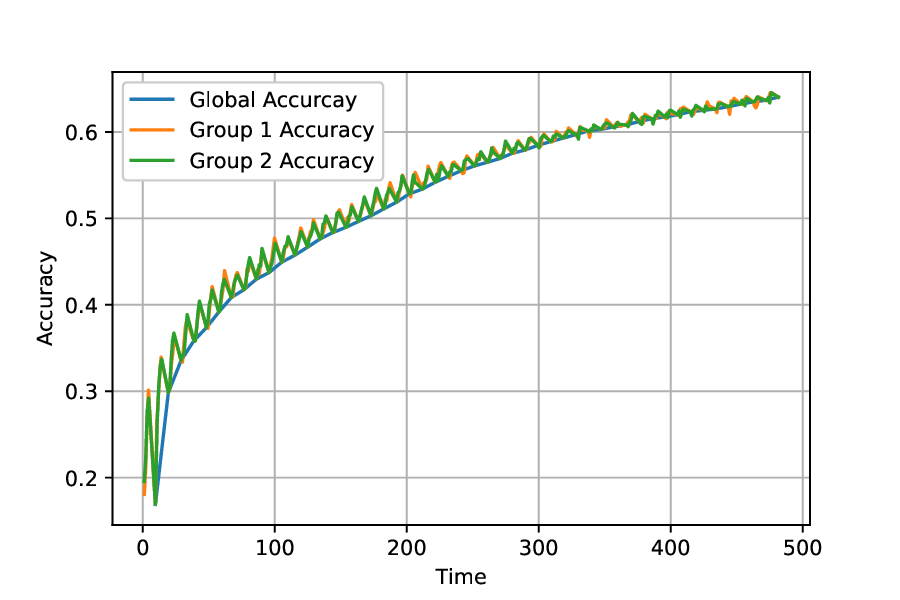}%
        \label{fig:1b}%
        }%
    \caption{HFL: 10 clients per group, parameters $\{0.09,0.1,0.009,0.01,1,3,0.05,0.1\}$, and $S=5$.}
    \label{1}
    
\end{figure*}

\begin{figure}[h]
\vspace{-.3in}
\centering
\includegraphics[width=0.5\linewidth]{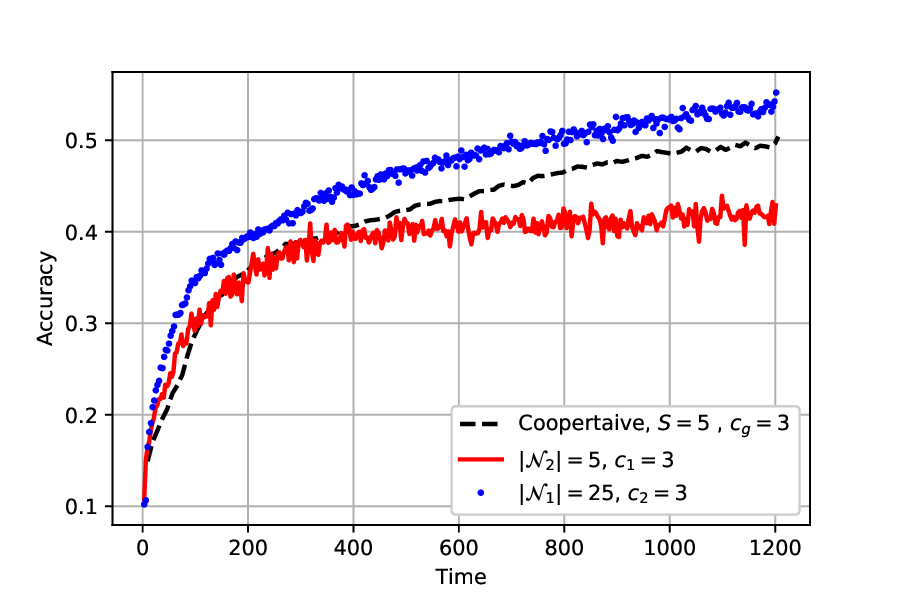}
\caption{Significance of group cooperation under non-i.i.d data.}
\label{3}
\end{figure}
In Fig.~\ref{3}, using the same CNN architecture on CIFAR-10, the significance of collaborative learning is emphasized. We change the number of clients per group, and run three experiments: one for each group in an isolated fashion, i.e., $S=T$, and one under the HFL setting ($S=5$). First, while we do not conduct our theoretical analysis under heterogeneous data distribution, we consider a \textcolor{black}{non-i.i.d.} data distribution among the two groups in this setting (Dirichlet(0.1)), and we see that our proposed algorithm still \textit{converges}. Second, it is clear that the performance of the group with fewer clients under heterogeneous data distribution and isolated learning will be deteriorated. However, aided by HFL, its performance improves, while the performance of the other group does not suffer greatly, promoting \textcolor{black}{\it fairness} among the groups.

To provide further insights on the significance of cooperation, we consider an extreme data heterogeneity setting in which each client only has two labels, with no common labels being shared among the two groups. In Fig.~\ref{4}, we plot our results and show that HFL \textit{converges} even when the groups are isolated (non-cooperative setting, i.e., $S = T$) yet since each group only has a strict subset of the total number of labels, they both converge to a lower accuracy compared to the one they get from cooperation. Further, in Fig.~\ref{fig:4b}, cooperation is shown to be significant in a non-extreme scenario in which the data distribution is Dirichlet(0.1). Different from Fig.~\ref{3}, the cooperative result in Fig.~\ref{fig:4b} beats both groups' isolated training results.

\begin{figure*}[h]
\vspace{-.3in}
\centering
\subfloat[No common labels between two groups.]{%
        \includegraphics[width=0.5\linewidth]{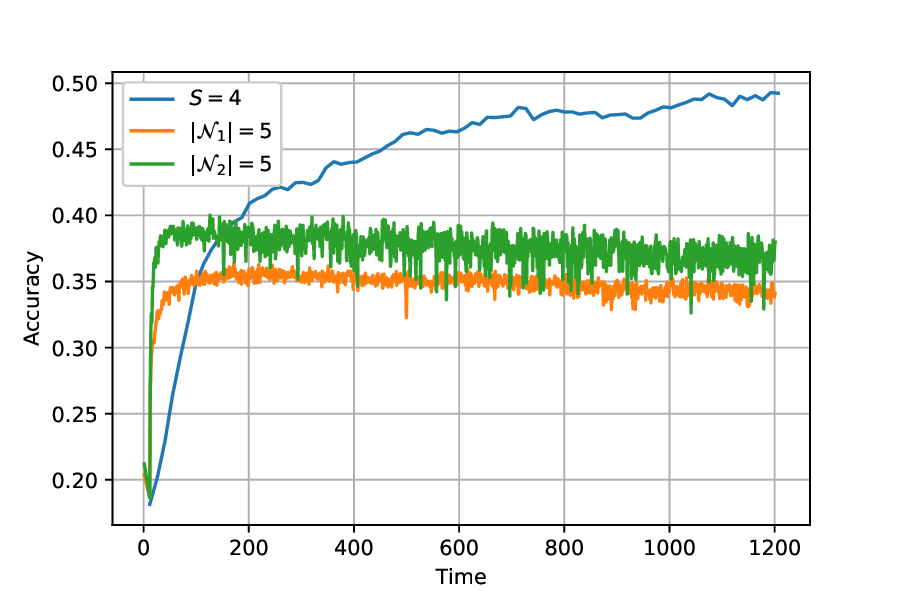}%
        \label{fig:4a}%
        }%
    \hfill%
    \subfloat[Data distribution follows Dirichlet(0.1).]{%
        \includegraphics[width=0.45\linewidth]{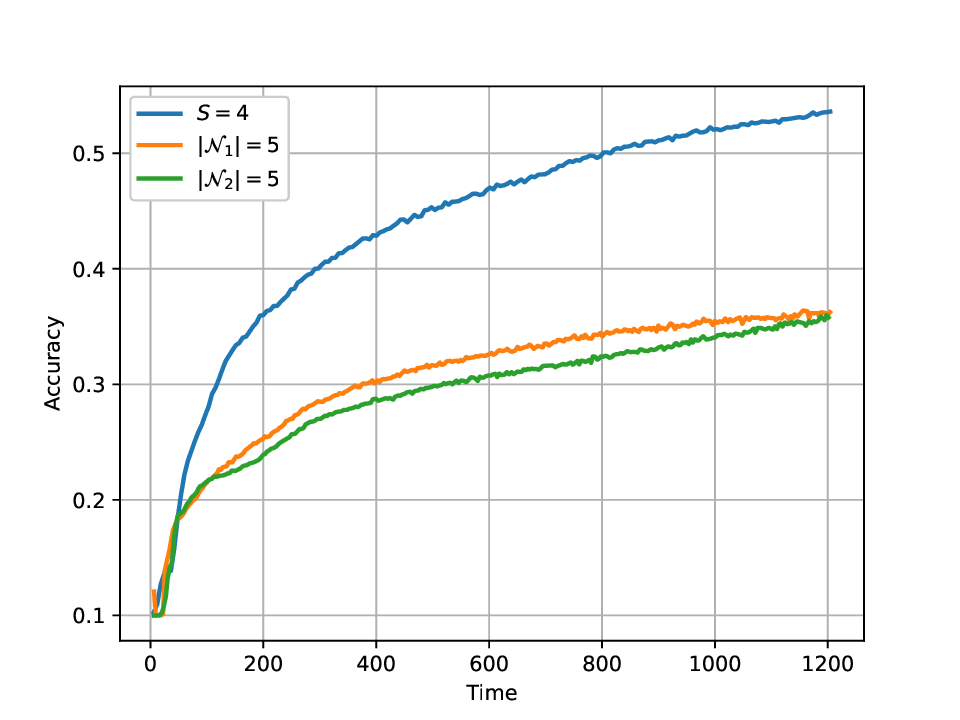}%
        \label{fig:4b}%
        }%
\caption{Significance of group cooperation.}
\label{4}

\end{figure*}

Hence, from the previous figures we can summarize that our proposed HFL setting is able to \textcolor{black}{enhance FL performance by reducing variance}, improving fairness, and learning more collaboratively by offering indirect access to datasets that are otherwise unavailable to clients.

\subsection{Significance of Choosing the Sync Time $S$}

In this section, we show how choosing $S$ can overcome the heterogeneity in the experienced delays among different groups, and can sustain a better performance for the whole system. Furthermore, we show that carefully choosing $S$ can outperform the baseline of setting $S=0$ (which corresponds to a centralized system (non-hierarchical)), with fewer visits to the GPS.

In Fig.~\ref{control_s_1}, we set the linear delay model parameters to $\{10^{-3},0.5,10^{-3},0.05,5,10, 10^{-3},0.05\}$, and train an MLP over the MNIST dataset. In this setting, we have two groups with 500 clients each. We can see that setting $S=5$ time units achieves the same performance as that of the baseline ($S=0$) yet with a {\it fewer number of global communication rounds}. The main reason behind this is that the cost of communicating with the GPS is relatively high in terms of the delay experienced in this setting.

\begin{figure}[h]
\vspace{-.3in}
\centering
\includegraphics[width=0.5\linewidth]{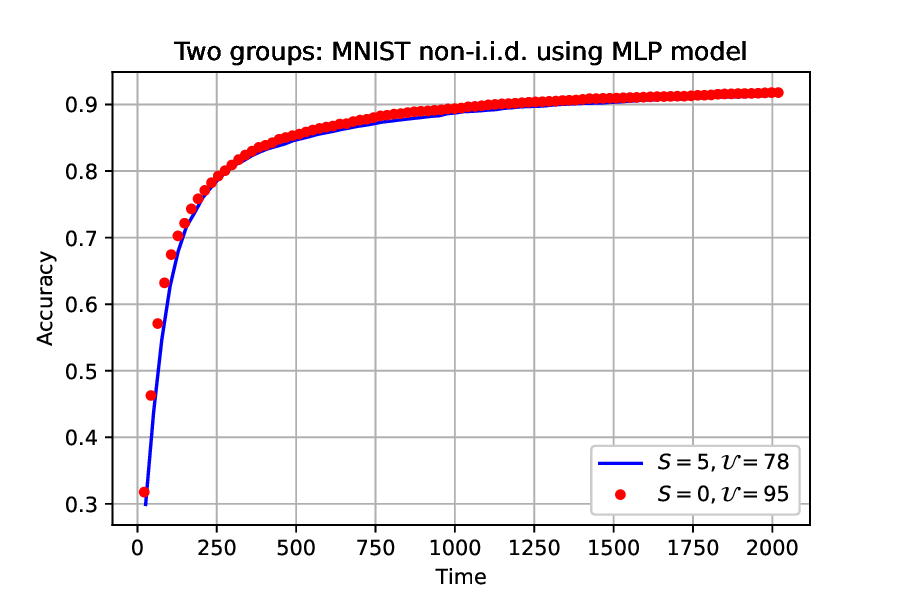}
\caption{Achieving the same performance as the baseline ($S=0$) with fewer global rounds.}
\label{control_s_1}

\end{figure}

Next, we show that we can {\it outperform} that baseline ($S=0$) by optimizing $S$. We train the CNN model on the CIFAR10 dataset for 30 clients divided among 2 groups, together with the challenge of having a non-i.i.d. data distribution. In Fig.~\ref{fig:5a}, we set the delay parameters to $\{0.01,0.85,10^{-3},0.085,4,2,0.4,0.2\}$. We see that $S=5$ beats the baseline with a notable gap and with fewer number of global rounds. In addition, we also see that in case there were no strict system time constraints, one may set $S=20$ which eventually approaches the baseline while saving $37\%$ in terms of global rounds. We next change $c_g$ to $30$ and rerun our experiments and show the results in Fig.~\ref{fig:5b}. We notice that $S=5$ is still the optimum choice, but in case the system has an additional constraint on communicating with the GPS, $S=20$ will be a better option, especially that the accuracy gain will not be sacrificed much. It is also worth noting that the system time $T$ (training time budget) plays a significant role in choosing $S$: $S=0$ (always communicate with the GPS) outperforms $S=20$ as long as $T \lessapprox 750$, and the opposite is true afterwards. This means that in some scenarios, the hierarchical setting may not be the optimal setting (which is different from the findings in \cite{Hfl_kh}). For instance, if the system has a stringent time constraint on learning, it would be better to communicate directly with GPS more frequently to get the advantage of learning the resulting models from different data.

\begin{figure*}[htp] 
    \vspace{-.3in}
    \centering
    \subfloat[$c_g=10$]{%
        \includegraphics[width=0.5\linewidth]{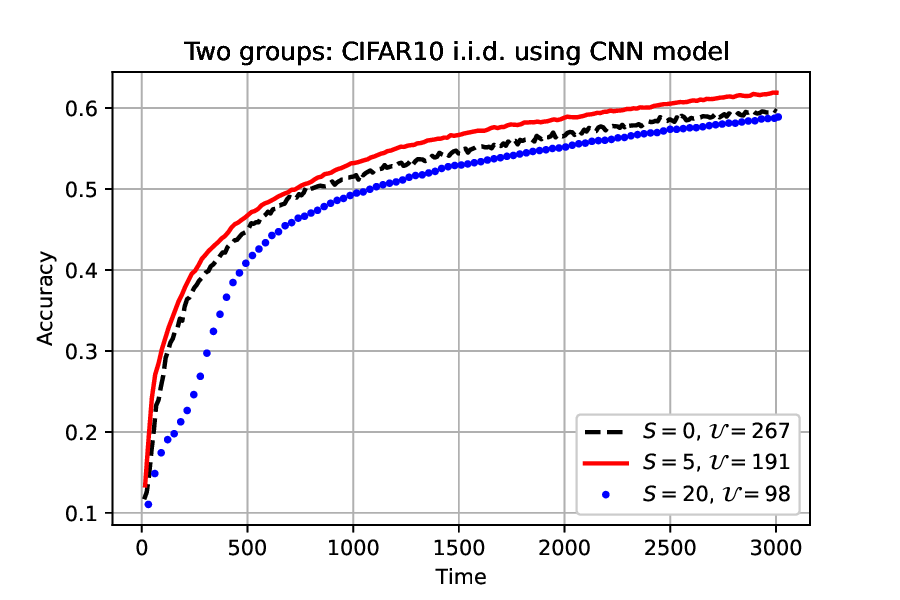}%
        \label{fig:5a}%
        }%
    \hfill%
    \subfloat[$c_g=30$]{%
        \includegraphics[width=0.5\linewidth]{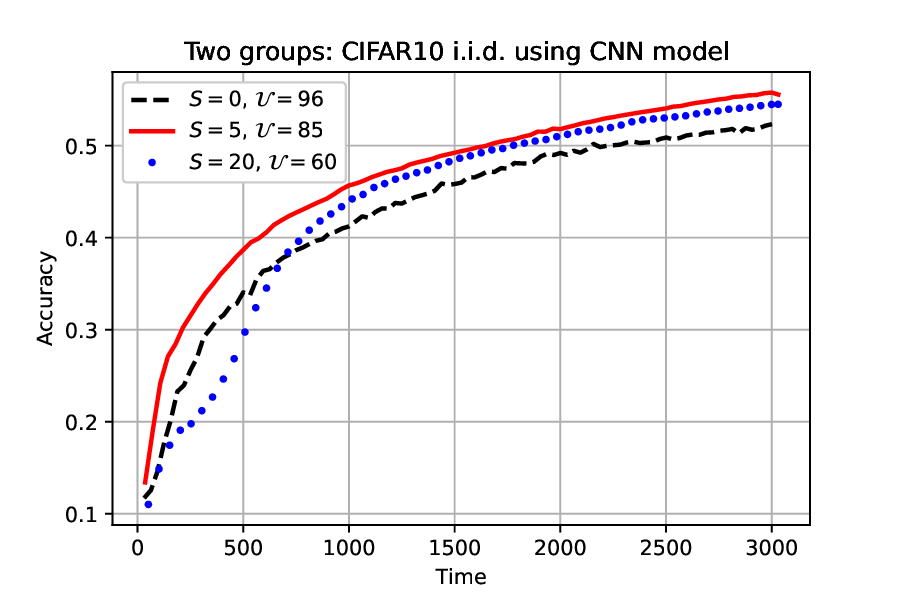}%
        \label{fig:5b}%
        }%
    \vspace{-.1in}
    \caption{Impact of the global shift parameter $c_g$ on choosing the sync time $S$.}
    \label{control_s_2}
   
\end{figure*}

\begin{figure}[h]

\centering
\includegraphics[width=0.5\linewidth]{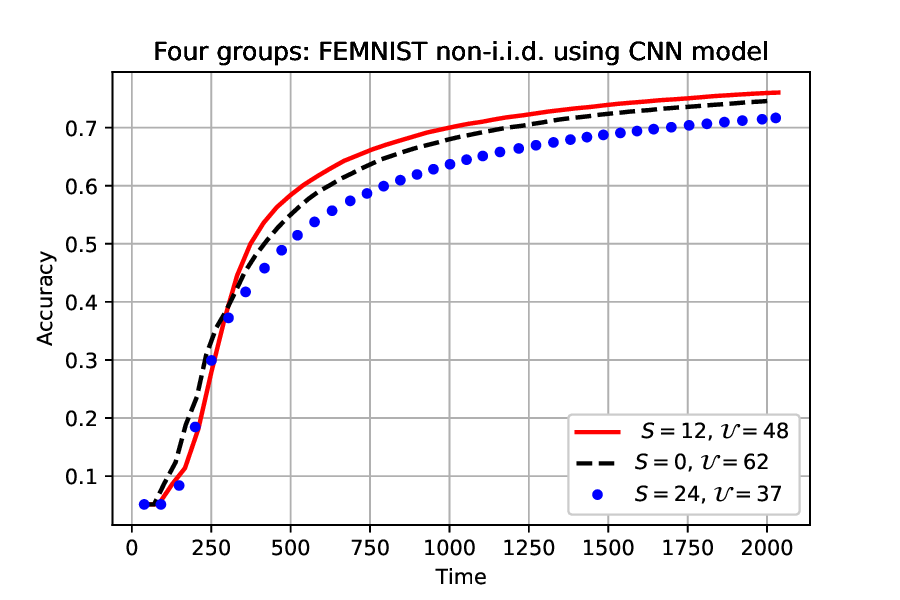}
\caption{Tuning $S$ to beat the baseline in a four group setting.}
\label{otp_s_w_4groups}

\end{figure}

To emphasize the significance of choosing $S$ in more challenging system setups, we train the CNN model on the Federated E-MNIST dataset under non-i.i.d. data distribution. We consider a large number of clients, 3400, divided equally among $|\mathcal{N}_g|=4$ groups, with delay parameters $\{ 0.1,  1.4, 0.01, 0.14, 4,  2, 0.5,  1\}$. In Fig.~\ref{otp_s_w_4groups}, the baseline $S=0$ beats $S=12$ and $S=24$ for $T\lessapprox300$, while eventually $S=12$ achieves the best performance. We note that one reason for the relatively larger choices of $S$ compared to previous cases is that the delay parameters are relatively larger. This emphasizes how $S$ depends on the intra- and inter-group delays.

\subsection{Clustering: Effect of the Number of Groups}

 \begin{figure}[h]
\centering
\includegraphics[width=0.5\linewidth]{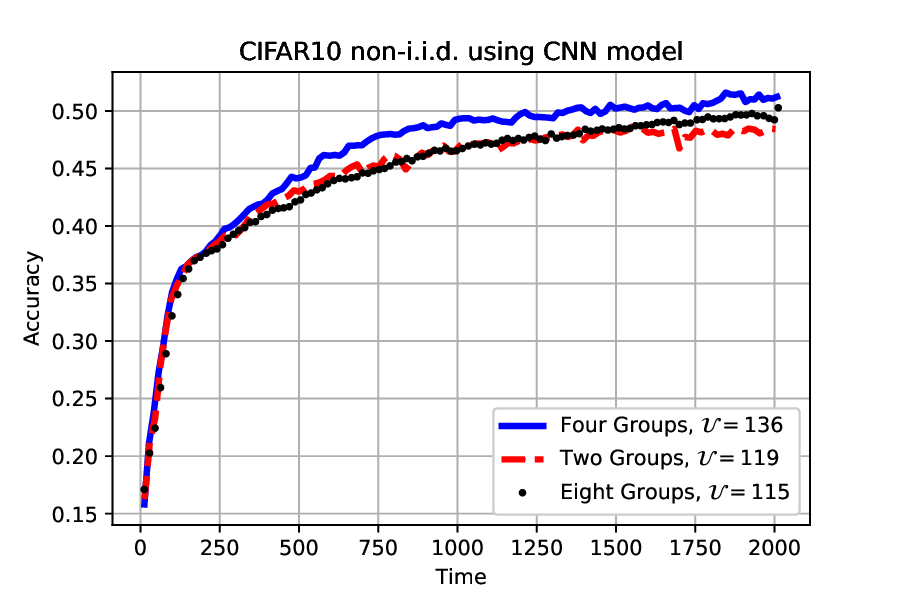}
\caption{Effect of the number of groups on the performance of HFL.}
\label{two_four_eight_fig}
\end{figure}
Adopting the linear delay model presents an interesting phenomenon that we investigate in this section. That is, increasing the number of groups results in increasing the delay between the GPS and the LPSs (i.e, $c_g$ and $1/\lambda_g$ will increase), but at the same time the delay between an LPS and its associated clients will decrease since there would be a smaller number of clients in a group (i.e, $c_i$ and $1/\lambda_i$ will decrease). In Fig.~\ref{two_four_eight_fig}, we consider a system of 64 clients divided equally between 2, 4, and 8 groups. The delay system parameters are given by $\{0.1,1.4,0.01,0.14,0.5,5,0.05,0.5\}$. We show that under the same choice of $S=6$, clustering clients into 4 groups is better than clustering them into 2 or 8 groups. We justify this since in the 4-group setting, the system strikes a balance between two extremes. That is, in the 2-group setting, the relatively high intra-group delay reduces the number of local iterations ($t_i^u$), while in the 8-group setting the relatively high LPS-to-GPS delay reduces the number of global rounds ($\mathcal{U}$).

\begin{figure*}[htp] 
    \centering
    \subfloat[$c_1=1 \text{ and} \:\: c_2=7$]{%
        \includegraphics[width=0.5\linewidth]{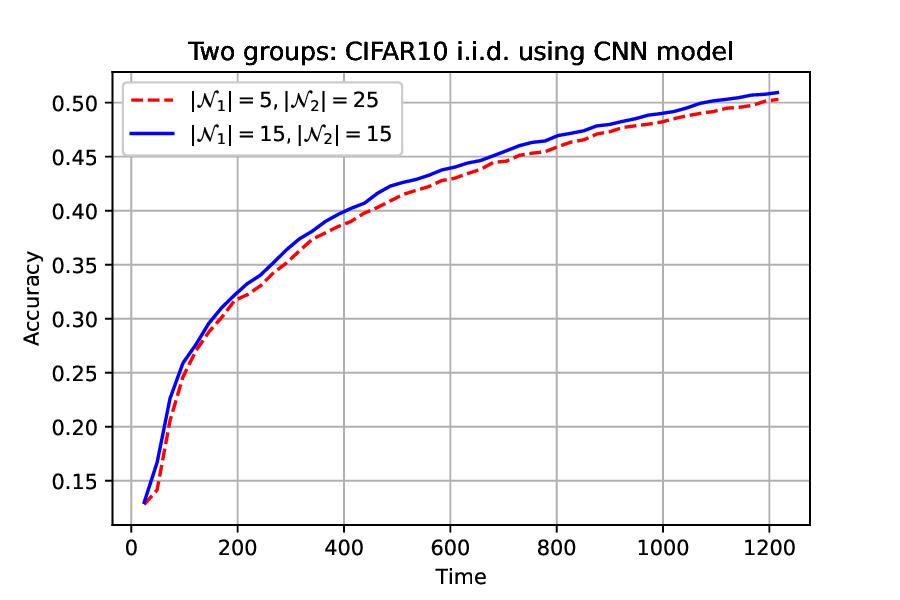}%
        \label{fig:UA_a}%
        }%
    \hfill%
    \subfloat[$c_1=7 \text{ and} \:\: c_2=1$]{%
        \includegraphics[width=0.5\linewidth]{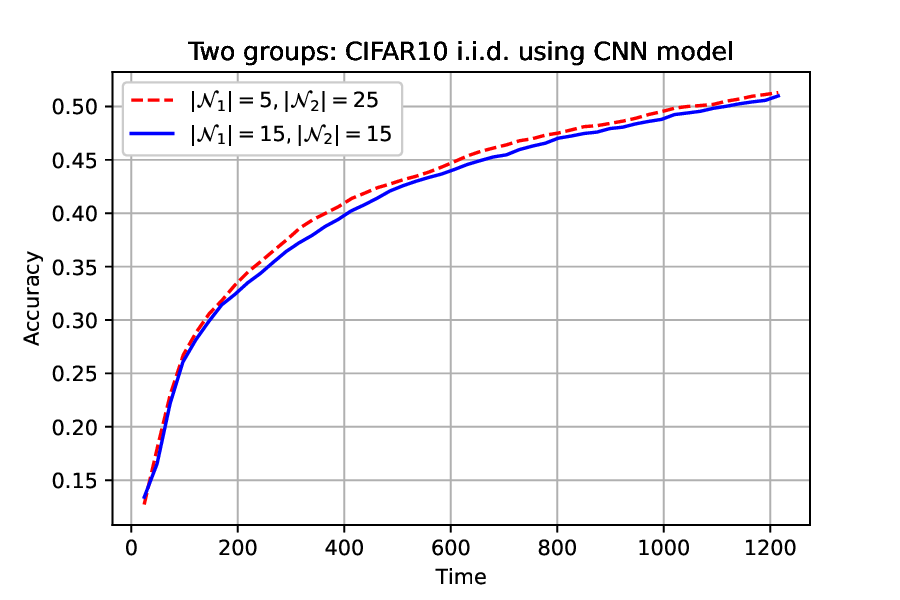}%
        \label{fig:UA_b}%
        }%
    \caption{Impact of the groups' shift parameters $c_1$ and $c_2$ on the group-client association under $S=8$ and $c_g=10$.}
    \label{userassociation}
\end{figure*}

\subsection{Effects of Client Association and Global Delay}

In this section, we study the effects of the shift parameters on the overall performance of HFL, with a focus on client association and LPS-to-GPS delay.

In Fig.~\ref{userassociation}, the effect of the groups' shift parameters $c_1$ and $c_2$ on determining the optimal group-client association is investigated. The results show that it is not always optimal to cluster the clients evenly among the groups. In Fig.~\ref{fig:UA_b} for instance, we see that assigning less clients to a group with a relatively smaller shift parameter performs better than an equal assignment of clients among both groups; this is observation is reversed in Fig.~\ref{fig:UA_a}, in which a larger number of clients is assigned to the relatively slower LPS.

\begin{figure}[h]
\centering
\includegraphics[width=0.5\linewidth]{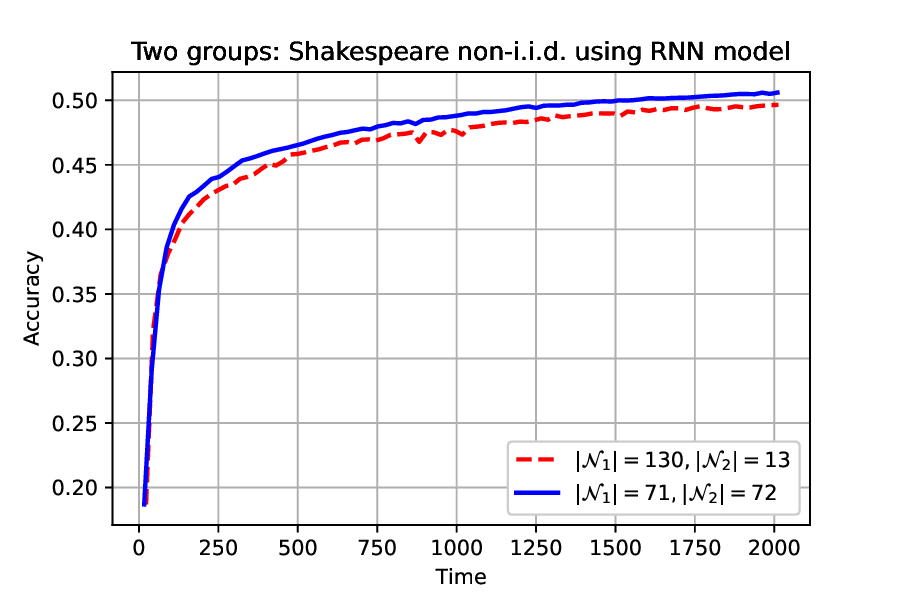}
\caption{The symmetric and asymmetric user association under linear delay model.}
\label{rnn_shk}
\end{figure} 

In Fig.~\ref{rnn_shk}, we train the RNN model on the Shakespeare dataset under non-i.i.d. data distribution with $S=12$. The delay parameters are set to $\{0.1,0.7,0.001,0.13,3,3,0.01,0.2\}$.  
We observe that the symmetric setting outperforms the non-symmetric one. The main reason behind this is that in the asymmetric case, the second group with fewer clients has to idly wait for the other group until it finishes its iterations, and hence the straggler effect becomes dominant.

\begin{figure}[h]
\centering
\includegraphics[width=0.45\linewidth]{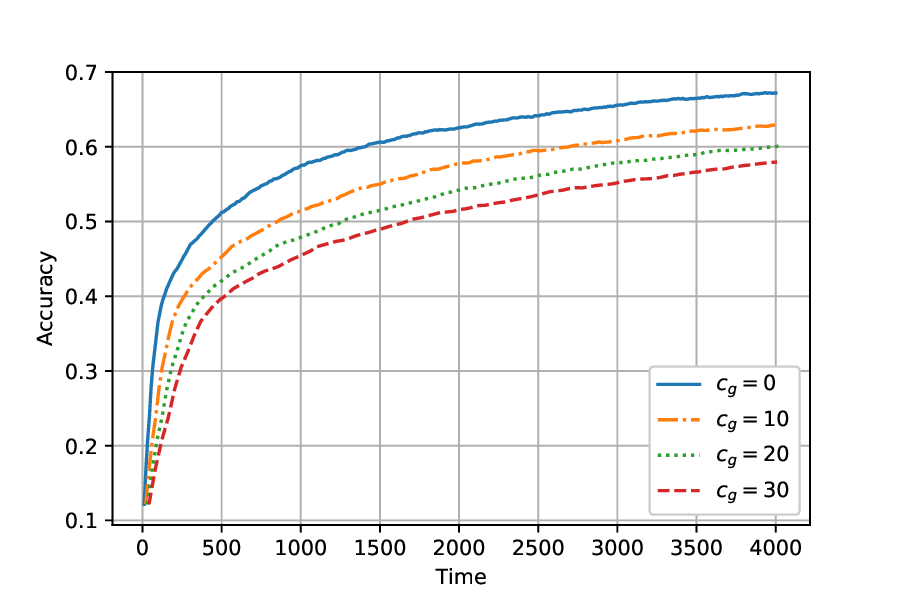}
\caption{The effect of global shift parameter $c_g$ under $S=10$.}
\label{5}

\end{figure}

Finally, in Fig.~\ref{5}, the impact of global shift parameter $c_g$ on the global accuracy is shown. As the global shift delay parameter increases, the performance  gets worse. This is mainly because the number of global communication rounds with the GPS, $\mathcal{U}$, is reduced, which hinders the clients from getting the benefit of accessing other clients' learning models. 

\noindent\textbf{Future Investigation.} Guided by our understanding from the convergence bounds and the simulation results, we observe that it is better to make the parameter $S$ \textit{variable} especially during the first global communication rounds. For instance, instead of fixing $S=5$, we allow $S$ to increase gradually with each round from $1$ to $5$, and then fix it at $5$ for the remaining rounds. Our reasoning behind this is that the clients' models need to be {\it directed} towards global optimum, and not their local optima. Since this direction is done through the GPS, it is reasonable to communicate with it more frequently at the beginning of learning to push the local models towards the optimum direction. To investigate this setting, we train a logistic regression model over the MNIST dataset, and distribute it in a non-iid fashion over 500 clients per group. As shown in Fig.~\ref{svariable}, the variable $S$ approach achieves a higher accuracy than the fixed one, with the effect more pronounced as $S$ increases.

\begin{figure*}[htp] 
    \vspace{-.3in}
    \centering
    \subfloat[$S=5$]{%
        \includegraphics[width=0.33\linewidth]{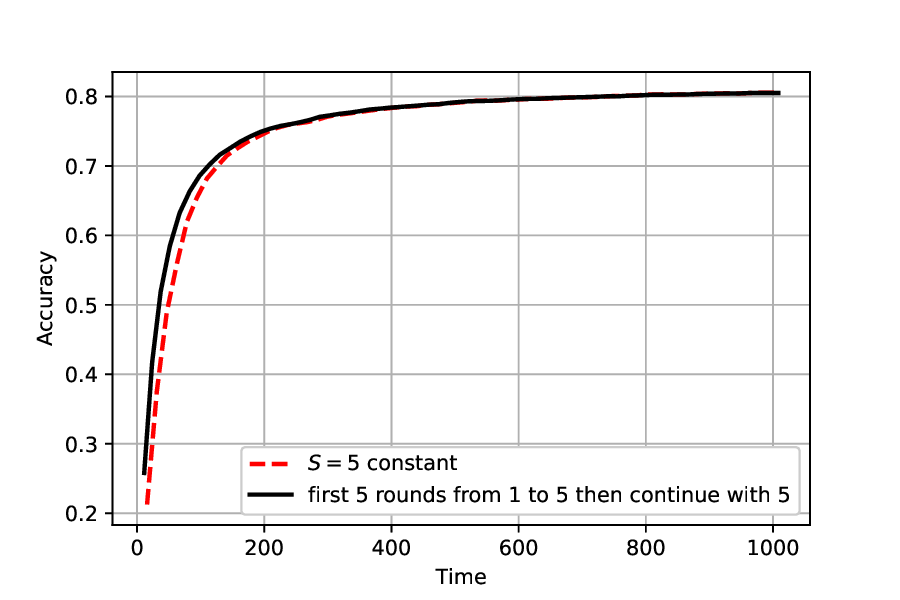}%
        \label{fig:a}%
        }%
    \hfill%
    \subfloat[$S=10$]{%
        \includegraphics[width=0.33\linewidth]{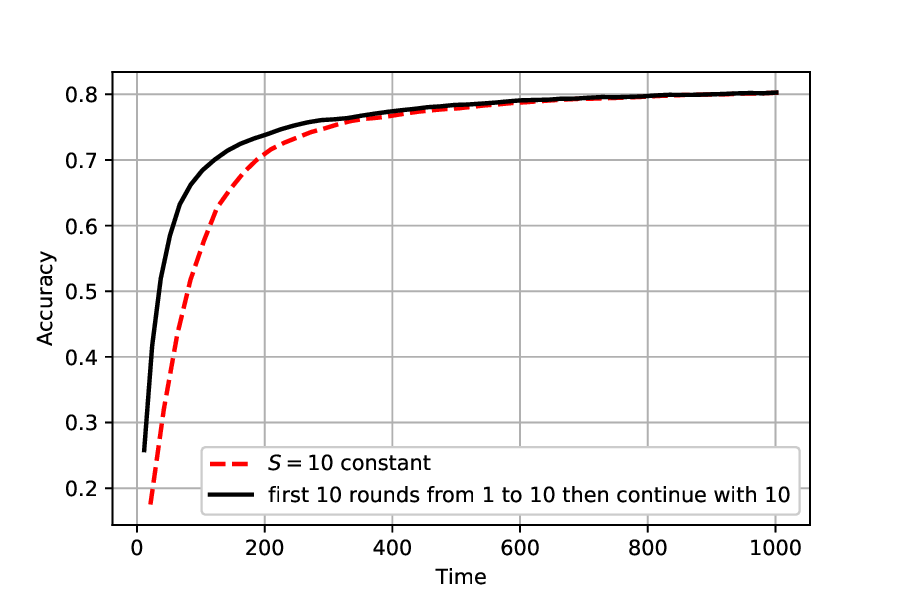}%
        \label{fig:b}%
        }%
        \hfill%
    \subfloat[$S=20$]{\includegraphics[width=0.33\linewidth]{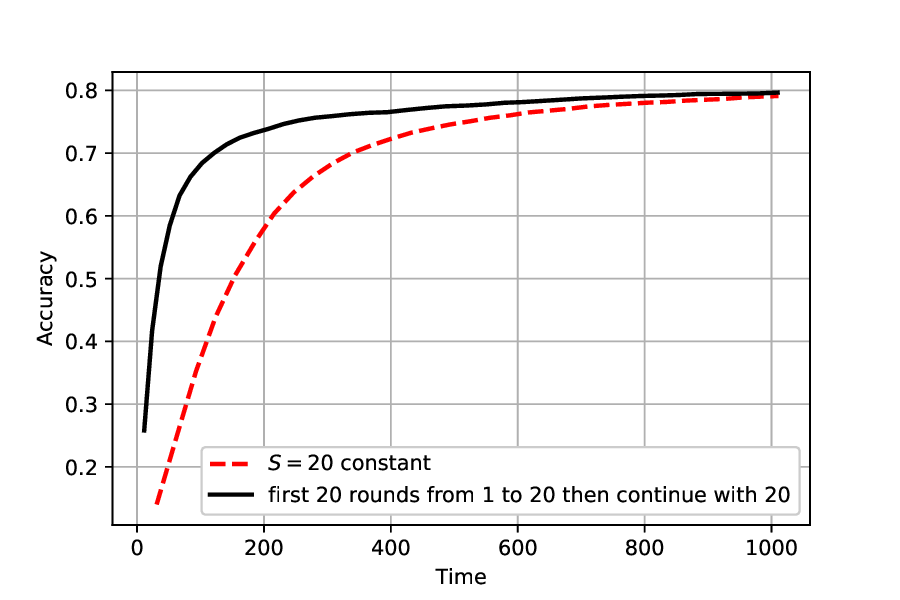}%
        \label{fig:c}%
        }
    \caption{Comparison between variable and fixed $S$ with respect to the global learning accuracy.}
    \label{svariable}

\end{figure*}
\section{Conclusion}\label{conclusion}
A delay sensitive HFL algorithm has been proposed, in which the effects of wall-clock times and delays on the accuracy of FL \textcolor{black}{are investigated}. A sync time $S$ governs how many local iterations are allowed at LPSs before forwarding to the GPS, and a system time $T$ constrains the overall training period. Our findings reveal that the optimal $S$ depends on different factors such as the delays at the LPSs and the GPS, the number of clients per group, and the value of $T$. Multiple insights are drawn on the performance of HFL in time-restricted training settings. 

Mitigating the impact of stragglers in delay sensitive HFL could be further mitigated via effective scheduling, or through changing the system into asynchronous learning. Both directions will be investigated in future work.
\appendices

\section{Preliminaries}
We will use the following relationships throughout our proofs without explicit reference:

\begin{itemize}
\item For any $x, y \in \mathbb{R}^n$, we have: $\langle  x,y\rangle \leq  \frac{1}{2}\left\|x\right\|^2+ \frac{1}{2}\left\|y\right\|^2$.   
\item By Jensen's inequality, for $x_{i} \in \mathbb{R}^n$, $i \in \{1,2,3,\dots,N\}$, we have
$\left\|\frac{1}{N}\sum_{i=1}^{N}x_i\right\|^2 \leq \frac{1}{N}\sum_{i=1}^{N}\left\|x_i\right\|^2$,
which implies $\left\|\sum_{i=1}^{N}x_i\right\|^2 \leq N\sum_{i=1}^{N}\left\|x_i\right\|^2$.
\end{itemize}

\section{Proof of Lemma~\ref{lemma_1}}\label{appB}
Conditioning on the number of local updates of group $i$ up to and including global round $u$, ${\bm t}_i^{u}$, we evaluate the expected difference between the aggregated global model and the latest local model at group $i$, by the end of global round $u$. Based on \eqref{eq_local-update} and \eqref{global_update}, the following holds:
\begin{align}
\mathbb{E}&_{|\bm t_i^u}\left\| x^{u+1}-x_{i}^{u,t_{i}^{u}}\right\|^2 =\alpha^2 \mathbb{E}_{|\bm t_i^u}\!\left\| \frac{1}{|\mathcal{N}_i|} \underset{{k \in \mathcal{N}_i}}{\sum}\!\overset{t_{i}^{u}-1}{\underset{l=0}\sum}\Tilde{g}_{i,k}\left(x_{i}^{u,l}\right)- \frac{1}{ \underset{i \in \mathcal{N}_g}{\sum}|\mathcal{N}_i|} \underset{i \in \mathcal{N}_g}{\sum}\! \frac{1}{t_{i}^{u}}\underset{k \in \mathcal{N}_i}{\sum} \overset{t_{i}^{u}-1}{\underset{l=0}\sum}\Tilde{g}_{i,k}\left(x_{i}^{u,l}\right)\right\|^2 \!\!\!  \nonumber\\
\leq& 2 \alpha^2  \mathbb{E}_{|\bm t_i^u} \frac{1}{|\mathcal{N}_{i}|^2} \left\| \sum_{k \in \mathcal{N}_i} \sum_{l=0}^{t_{i}^{u}-1}\Tilde{g}_{i,k}\left(x_{i}^{u,l}\right)\right\|^2 + 2 \alpha^2 \mathbb{E}_{|\bm t_i^u} \frac{|\mathcal{N}_g|}{\left(\sum_{i \in \mathcal{N}_g}|\mathcal{N}_i|\right)^2}\!\sum_{i \in \mathcal{N}_g}\! \!\frac{1}{(t_{i}^{u})^2}\left\| \sum_{k \in \mathcal{N}_i}\! \sum_{l=0}^{t_{i}^{u}-1}\Tilde{g}_{i,k}\left(x_{i}^{u,l}\right)\right\|^2 \!\!  \nonumber\\
\leq& 2 \alpha^2 \left( \frac{1}{|\mathcal{N}_{i}|^2} + \frac{|\mathcal{N}_g|}{\left( \sum{i \in \mathcal{N}_g}|\mathcal{N}_i|\right)^2(t_i^{u})^2}\right)  \mathbb{E}_{|\bm t_i^u} \left\| \sum_{k \in \mathcal{N}_i } \sum_{l=0}^{t_{i}^{u}-1}\Tilde{g}_{i,k}\left(x_{i}^{u,l}\right)\right\|^2 \nonumber \\
&+\frac{2 \alpha^2 |\mathcal{N}_g|}{\left(\sum_{i \in \mathcal{N}_g}|\mathcal{N}_i|\right)^2}\sum_{j \in \mathcal{N}_g \setminus{\{i\}}}\frac{1}{(t_j^{u})^2}\mathbb{E}_{|\bm t_i^u} \left\|\sum_{k \in \mathcal{N}_j} \sum_{l=0}^{t_{j}^{u}-1}\Tilde{g}_{j,k}\left(x_{j}^{u,l}\right)\right\|^2 \nonumber\\
\leq&  \left(\frac{2 \alpha^2 }{|\mathcal{N}_{i}|} + \frac{2 \alpha^2 |\mathcal{N}_g||\mathcal{N}_i|}{\left(\sum_{i \in \mathcal{N}_g}|\mathcal{N}_i|\right)^2(t_i^u)^2}\right) \sum_{k \in \mathcal{N}_i} \mathbb{E}_{|\bm t_i^u}\left\|\sum_{l=0}^{t_{i}^{u}-1}\!\!\Tilde{g}_{i,k}\left(x_{i}^{u,l}\right)\right\|^2 \nonumber \\
&+\frac{2 \alpha^2|\mathcal{N}_g|}{\left( \sum_{i \in \mathcal{N}_g}|\mathcal{N}_i|\right)^2}\sum_{j \in \mathcal{N}_g \setminus{\{i\}}} \frac{|\mathcal{N}_j|}{(t_j^u)^2}
\sum_{k \in \mathcal{N}_j}\!\!\mathbb{E}_{|\bm t_i^u}\left\| \sum_{l=0}^{t_{j}^{u}-1}\Tilde{g}_{j,k}\left(x_{j}^{u,l}\right)\right\|^2. 
\end{align}
The proof now follows by applying Assumption~3 (bounded gradient in \eqref{assum_3}) to the bound above.
\vspace{-.5in}
\section{Proof of Theorem~\ref{CA_Group}}\label{appC}
\vspace{-.1in}
Based on the smoothness assumption of the loss function at LPS $i$, the SGD update rule in \eqref{eq_lps-model-itr}, and the local aggregation rule in \eqref{eq_lps-model-agg}, one can write
\begin{align} \label{smooth_bound}
\mathbb{E}_{|\bm t_i^u}f_{i}\left(x_{i}^{u,l}\right) \leq& \mathbb{E}_{|\bm t_i^u}f_{i}\left(x_{i}^{u,l-1}\right)+\frac{L}{2} \mathbb{E}_{|\bm t_i^u}\left\|x_{i}^{u,l}-x_{i}^{u,l-1}\right\|^2
+\mathbb{E}_{|\bm t_i^u} \langle\nabla f_{i}\left(x_{i}^{u,l-1}\right),x_{i}^{u,l}-x_{i}^{u,l-1}\rangle 
\nonumber\\
=&\mathbb{E}_{|\bm t_i^u}f_{i}\left(x_{i}^{u,l-1}\right) 
+   \frac{\alpha^2 L}{2}  \mathbb{E}_{|\bm t_i^u}\left\| \sum_{k \in \mathcal{N}_i} \frac{1}{|\mathcal{N}_i|} \Tilde{g}_{i,k}\left(x_{i}^{u,l-1}\right)\right\|^2 \nonumber \\
&+ \alpha \mathbb{E}_{|\bm t_i^u} \langle \nabla f_{i}\left(x_{i}^{u,l-1}\right),  \frac{-1}{|\mathcal{N}_{i}|} \sum_{k \in \mathcal{N}_i}\Tilde{g}_{i,k}\left(x_{i}^{u,l-1}\right) \rangle.
\end{align}
For the inner product term above, we have
\begin{align} \label{dot_bound}
&2 \mathbb{E}_{|\bm t_i^u} \langle \nabla f_{i}\left(x_{i}^{u,l-1}\right), \frac{-1}{|\mathcal{N}_i|} \sum_{k \in \mathcal{N}_i} \Tilde{g}_{i,k}\left(x_{i}^{u,l-1}\right)\rangle \overset{(\text{i})}{=} \mathbb{E}_{|\bm t_i^u} \left\|\nabla f_{i}\left(x_{i}^{u,l-1}\right)-  \frac{1}{|\mathcal{N}_i|}\!\!\sum_{k \in \mathcal{N}_i} \!\!\nabla F_{i,k}\left(\!x_{i}^{u,l-1}\!\right)\right\|^2\!\!\nonumber \\
&\hspace{2.25in}-\mathbb{E}_{|\bm t_i^u}\!\!\left\|\nabla f_{i}\!\left(\!x_{i}^{u,l-1}\!\right)\!\right\|^2\!\!\!\!-\!\mathbb{E}_{|\bm t_i^u}\left\|\frac{1}{|\mathcal{N}_i|}\sum_{k \in \mathcal{N}_i}\!\! \nabla F_{i,k}\!\!\left(x_{i}^{u,l-1}\right)\right\|^2,
\end{align}
where (i) follows from Assumption~2 (unbiased stochastic gradient in \eqref{assum_2}), and then applying  $ \langle x,y\rangle\!\!=\!\!\frac{1}{2}\left(\left\|x+y\right\|^2-\left\|x\right\|^2-\left\|y\right\|^2 \right)$. Regarding second term in \eqref{smooth_bound}, the following holds: 
\begin{align} \label{sgd_bound}
\mathbb{E}_{|\bm t_i^u}\left\| \frac{1}{|\mathcal{N}_i|} \sum_{k \in \mathcal{N}_i} \Tilde{g}_{i,k}\left(x_{i}^{u,l-1}\right)\right\|^2
 &\overset{(\text{ii})}{=}\frac{1}{|\mathcal{N}_i|^2} \sum_{k \in \mathcal{N}_i} \mathbb{E}_{|\bm t_i^u}\left\| \Tilde{g}_{i,k}\left(x_{i}^{u,l-1}\right)-\nabla F_{i,k}\left(x_{i}^{u,l-1}\right)\right\|^2\nonumber \\
 &\hspace{2in}+ \mathbb{E}_{|\bm t_i^u}\left\| \frac{1}{|\mathcal{N}_i|} \sum_{k \in \mathcal{N}_i} \nabla F_{i,k}\left(x_{i}^{u,l-1}\right)\right\|^2  \nonumber \\
 &\leq \frac{\sigma^2}{|\mathcal{N}_i|} +\mathbb{E}_{|\bm t_i^u}\left\| \frac{1}{|\mathcal{N}_i|}\sum_{k \in \mathcal{N}_i} \nabla F_{i,k}\left(x_{i}^{u,l-1}\right)\right\|^2,
\end{align}
where (ii) follows from adding and subtracting $\nabla F_{i,k}\left(x_{i}^{u,l-1}\right)$, and that each $k$th term $\Tilde{g}_{i,k}\left(x_{i}^{u,l-1}\right)-\nabla F_{i,k}\left(x_{i}^{u,l-1}\right)$ has zero mean and the overall $|\mathcal{N}_i|$ terms are independent across different clients. Substituting \eqref{dot_bound} and \eqref{sgd_bound} into \eqref{smooth_bound}:
\begin{align} 
\mathbb{E}&_{|\bm t_i^u}f_{i}\left(x_{i}^{u,l}\right) \leq \mathbb{E}_{|\bm t_i^u} f_{i}\left(x_{i}^{u,l-1}\right) + \frac{\alpha^2L\sigma^2}{2|\mathcal{N}_i|}+ \frac{\alpha^2 L}{2} \mathbb{E}_{|\bm t_i^u}\left\| \frac{1}{|\mathcal{N}_i|}\sum_{k \in \mathcal{N}_i} \nabla F_{i,k}\left(x_{i}^{u,l-1}\right)\right\|^2 \nonumber \\
&+ \frac{\alpha}{2}\Biggl(\mathbb{E}_{|\bm t_i^u} \left\|\nabla f_{i}\left(x_{i}^{u,l-1}\right)-  \frac{1}{|\mathcal{N}_i|}\sum_{k \in \mathcal{N}_i} \nabla F_{i,k}\left(x_{i}^{u,l-1}\right)\right\|^2 -\mathbb{E}_{|\bm t_i^u}\left\|\nabla f_{i}\left(x_{i}^{u,l-1}\right)\right\|^2\!\!\!\!\nonumber \\
&\hspace{3in}-\mathbb{E}_{|\bm t_i^u}\left\|\frac{1}{|\mathcal{N}_i|}\!\!\sum_{k \in \mathcal{N}_i} \nabla F_{i,k}\left(x_{i}^{u,l-1}\right)\right\|^2\!\Biggl) \nonumber \nonumber \\
=&\mathbb{E}_{|\bm t_i^u}f_{i}\left(x_{i}^{u,l-1}\right) - \frac{\alpha}{2} \mathbb{E}_{|\bm t_i^u}\left\|\nabla f_{i}\left(x_{i}^{u,l-1}\right)\right\|^2  +  \frac{\alpha^2 L \sigma^2 }{2|\mathcal{N}_i|} \nonumber \\
&\hspace{2in}-\left(\frac{\alpha}{2}- \frac{\alpha^2 L}{2}\right)\mathbb{E}_{|\bm t_i^u}\left\| \frac{1}{|\mathcal{N}_i|}\sum_{k \in \mathcal{N}_i} \nabla F_{i,k}\left(x_{i}^{u,l-1}\right)\right\|^2 \label{eq_nameless} \\
\leq&\mathbb{E}_{|\bm t_i^u}f_{i}\left(x_{i}^{u,l-1}\right) - \frac{\alpha}{2} \mathbb{E}_{|\bm t_i^u}\left\|\nabla f_{i}\left(x_{i}^{u,l-1}\right)\right\|^2+ \frac{\alpha^2 L \sigma^2 }{2|\mathcal{N}_i|} ,
\end{align}
where \eqref{eq_nameless} follows from \eqref{group_loss}, and the last inequality follows by choosing $0 < \alpha \leq \frac{1}{L}$.
Next, rearranging the terms above and summing over all local iterations until $t_{i}^{u}$, we have
\begin{align}
      \frac{\alpha}{2} \sum_{l=1}^{t_{i}^{u}}\mathbb{E}_{|\bm t_i^u}\left\|\nabla f_{i}\left(x_{i}^{u,l-1}\right)\right\|^2 &\leq \sum_{l=1}^{t_{i}^{u}} \left[\mathbb{E}_{|\bm t_i^u}f_{i}\left(x_{i}^{u,l-1}\right) - \mathbb{E}_{|\bm t_i^u}f_{i}\left(x_{i}^{u,l}\right) \right] + t_{i}^{u}  \frac{\alpha^2 L\sigma^2 }{2|\mathcal{N}_i|} \nonumber \\
    & = \left[\mathbb{E}_{|\bm t_i^u}f_{i}\left(x_{i}^{u}\right) - \mathbb{E}_{|\bm t_i^u}f_{i}\left(x_{i}^{u,t_{i}^{u}}\right) \right] +  t_{i}^{u} \frac{\alpha^2L\sigma^2}{2|\mathcal{N}_i|}.
\end{align}
Now taking the average over all global communication rounds yields
\begin{align} \label{global_avg}
 &\frac{\alpha}{2\sum_{u=1}^{\mathcal{U}} t_{i}^{u}} \sum_{u=1}^{\mathcal{U}}  \sum_{l=1}^{t_{i}^{u}} \mathbb{E}_{|\bm t_i^u}\left\|\nabla f_{i}\left(x_{i}^{u,l-1}\right)\right\|^2 
       \nonumber \\
&\leq   \frac{\alpha^2 L\sigma^2}{2|\mathcal{N}_i|}   +\frac{1}{\sum_{u=1}^{\mathcal{U}} t_{i}^{u}} \!\left(\!\mathbb{E}_{|\bm t_i^u}f_{i}\left(x_{i}^{1}\right)-\mathbb{E}_{|\bm t_i^u}f_{i}\left(x_{i}^{\mathcal{U},t_i^{\mathcal{U}}}\right) \!+\!\sum_{u=1}^{\mathcal{U}-1} 
    \mathbb{E}_{|\bm t_i^u}f_{i}\left(x_{i}^{u+1}\right)\!-\!\mathbb{E}_{|\bm t_i^u}f_{i}\left(x_{i}^{u,t_i^{u}}\right)\!\!\right).
       \end{align}
Now let us consider one of the summands in the equality above. We have
\begin{align}\label{last_step}
     \mathbb{E}_{|\bm t_i^u}f_{i}&\left(x_{i}^{u+1}\right)-\mathbb{E}_{|\bm t_i^u}f_{i}\left(x_{i}^{u,t_i^{u}}\right) \leq \mathbb{E}_{|\bm t_i^u}\langle\nabla f_{i}\left(x_{i}^{u,t_i^{u}}\right) ,x_{i}^{u+1}- x_{i}^{u,t_i^{u}}\rangle+ \frac{L}{2}\mathbb{E}_{|\bm t_i^u}\left\| x_{i}^{u+1}-x_{i}^{u,t_{i}^{u}}\right\|^2 \nonumber \\
  &\leq {\frac{1}{2}} \mathbb{E}_{|\bm t_i^u} \left(\left\|\nabla f_{i}\left(x_{i}^{u,t_i^{u}}\right)\right\|^2+ \left\|x_{i}^{u+1}- x_{i}^{u,t_i^{u}}\right\|^2\right) +\frac{L}{2}\mathbb{E}_{|\bm t_i^u}\left\| x_{i}^{u+1}-x_{i}^{u,t_{i}^{u}}\right\|^2 \nonumber \\
   &\leq  \frac{G^2}{2}+ (L +1)\alpha^2 \left( ( t_{i}^{u})^2+ \frac{|\mathcal{N}_g|}{\left(\sum_{i \in \mathcal{N}_g}|\mathcal{N}_i|\right)^2} \sum_{j \in \mathcal{N}_g }|\mathcal{N}_j|^2\right) G^2,
\end{align}
where the last inequality follows directly from Lemma~\ref{lemma_1} (note that each group restarts its model updates following each global iteration, and hence $x_{i}^{u+1,0}=x^{u+1}$). Finally, by substituting \eqref{last_step} into \eqref{global_avg} we get
\begin{align} 
&\frac{1}{\sum_{u=1}^{\mathcal{U}} t_{i}^{u}} \sum_{u=1}^{\mathcal{U}}  \sum_{l=1}^{t_{i}^{u}} \mathbb{E}_{|\bm t_i^u}\left\|\nabla f_{i}\left(x_{i}^{u,l-1}\right)\right\|^2  \leq \frac{2}{\alpha \sum_{u=1}^{\mathcal{U}} t_{i}^{u}}  
    \Biggl(\mathbb{E}_{|\bm t_i^u}f_{i}\left(x_{i}^{1}\right)-\mathbb{E}_{|\bm t_i^u}f_{i}\left(x_{i}^{\mathcal{U},t_i^{\mathcal{U}}}\right) \Biggl)+ \frac{ \alpha  L \sigma^2}{|\mathcal{N}_{i}|} \nonumber \\
    &\hspace{1.25in}+\frac{(\mathcal{U}-1) G^2 }{\alpha \sum_{u=1}^{\mathcal{U}} t_{i}^{u}}  + \frac{2 G^2(L +1)\alpha}{ \sum_{u=1}^{\mathcal{U}} t_{i}^{u}}   \sum_{u=1}^{\mathcal{U}-1}   \Bigg( ( t_{i}^{u})^2+ \underbrace{\frac{|\mathcal{N}_g|}{\left(\sum_{i \in \mathcal{N}_g}|\mathcal{N}_i|\right)^2} \sum_{j \in \mathcal{N}_g }|\mathcal{N}_j|^2}_{\triangleq\kappa}\Bigg)\!\!
    \nonumber\\
    &= \frac{2}{\alpha \sum_{u=1}^{\mathcal{U}} t_{i}^{u}} \Biggl(\mathbb{E}_{|\bm t_i^u}f_{i}\left(x_{i}^{1}\right)-\mathbb{E}_{|\bm t_i^u}f_{i}\left(x_{i}^{\mathcal{U},t_i^{\mathcal{U}}}\right) \Biggl)+ \frac{\alpha  L \sigma^2 }{|\mathcal{N}_{i}|} +\frac{2 (L +1)\alpha}{ \sum_{u=1}^{\mathcal{U}} t_{i}^{u}}   \sum_{u=1}^{\mathcal{U}-1}   ( t_{i}^{u})^2 G^2 \nonumber \\
    &\hspace{3in}+\Biggl(\frac{1}{\alpha \sum_{u=1}^{\mathcal{U}} t_{i}^{u}} +  \frac{2 (L +1) \kappa \alpha}{ \sum_{u=1}^{\mathcal{U}} t_{i}^{u}} \Biggr)(\mathcal{U}-1)  G^2. \end{align}
\section{Proof of Theorem~\ref{global_convg}}\label{appD}
\vspace{-.2in}
We first use the smoothness assumption of the global loss function, together with the SGD update rule in \eqref{global_update} to get the following:
\begin{align} \label{global_smooth_bound}
  \hspace{-3in}  \mathbb{E}_{|\bm t_i^u}f\left(x^{u+1}\right) \leq& \mathbb{E}_{|\bm t_i^u} f\left(x^{u}\right)+ \frac{\alpha^2 L}{2(\sum_{i \in \mathcal{N}_g}|\mathcal{N}_i|)^2}
\mathbb{E}_{|\bm t_i^u} \left\|\sum_{i \in \mathcal{N}_g}\!\! \frac{1}{t_{i}^{u}} \!\!\sum_{l=0}^{t_{i}^{u}-1}\!\!\sum_{k \in \mathcal{N}_i}\!\! \Tilde{g}_{i,k}\!\left(\!x_{i}^{u,l}\!\right)\right\|^2 \nonumber \\
&+ \alpha  \mathbb{E}_{|\bm t_i^u}\langle\nabla f\left(x^{u}\right),  \frac{-1}{\sum_{i \in \mathcal{N}_g}|\mathcal{N}_i|}\sum_{i \in \mathcal{N}_g} \frac{1}{t_{i}^{u}} \sum_{l=0}^{t_{i}^{u}-1}\sum_{k \in \mathcal{N}_i} \Tilde{g}_{i,k}\left(x_{i}^{u,l}\right)\rangle. 
\end{align}
For the inner product term above, we have
\begin{align} \label{global_dot_bound}
 &\hspace{-0.3in}\alpha \mathbb{E}_{|\bm t_i^u} \langle \nabla f\left(x^{u}\right),\frac{-1}{\sum_{i \in \mathcal{N}_g}|\mathcal{N}_i|}\sum_{i \in \mathcal{N}_g} \frac{1}{t_{i}^{u}} \sum_{l=0}^{t_{i}^{u}-1}\sum_{k \in \mathcal{N}_i} \Tilde{g}_{i,k}\left(x_{i}^{u,l}\right)\rangle \nonumber \\
  &=
  \frac{\alpha}{2} \left(\mathbb{E}_{|\bm t_i^u}\left\|\nabla f\left(x^{u}\right)-   \frac{1}{\sum_{i \in \mathcal{N}_g}|\mathcal{N}_i|}\sum_{i \in \mathcal{N}_g} \frac{1}{t_{i}^{u}} \sum_{l=0}^{t_{i}^{u}-1}\sum_{k \in \mathcal{N}_i}\nabla F_{i,k}\left(x_{i}^{u,l}\right)\right\|^2-\mathbb{E}_{|\bm t_i^u}\left\|\nabla f\left(x^{u}\right)\right\|^2\right.\nonumber\\ 
  &\hspace{1.5in}\left.\quad \qquad -\mathbb{E}_{|\bm t_i^u}\left\|\frac{1}{\sum_{i \in \mathcal{N}_g}|\mathcal{N}_i|}\sum_{i \in \mathcal{N}_g} \frac{1}{t_{i}^{u}} \sum_{l=0}^{t_{i}^{u}-1}\sum_{k \in \mathcal{N}_i}\nabla F_{i,k}\left(x_{i}^{u,l}\right)\right\|^2\right).
\end{align}
For the second term in \eqref{global_smooth_bound}, we have
\begin{align} \label{global_sgd_bound}
\mathbb{E}_{|\bm t_i^u}\left\|\sum_{i \in \mathcal{N}_g} \frac{1}{t_{i}^{u}} \sum_{l=0}^{t_{i}^{u}-1}\sum_{k \in \mathcal{N}_i} \Tilde{g}_{i,k}\left(x_{i}^{u,l}\right)\right\|^2 
   \!\!\leq|\mathcal{N}_g|\!\!\sum_{i \in \mathcal{N}_g }\! \!|\mathcal{N}_i|^2\sigma^2\!\!+ \!\mathbb{E}_{|\bm t_i^u}\!\!\left\| \sum_{i \in \mathcal{N}_g} \!\!\frac{1}{t_{i}^{u}} \!\!\sum_{l=0}^{t_{i}^{u}-1}\!\!\sum_{k \in \mathcal{N}_i} \!\!\nabla F_{i,k}\!\left(x_{i}^{u,l}\!\right)\right\|^2.
\end{align}
Substituting \eqref{global_dot_bound} and \eqref{global_sgd_bound} into \eqref{global_smooth_bound} and choosing $0 < \alpha \leq \frac{1}{L}$  yields
\vspace{-.2in}
\begin{align}  \label{global_smooth_bound_2}
\hspace{-0.5in}\mathbb{E}_{|\bm t_i^u}f\left(x^{u+1}\right) &\leq \mathbb{E}_{|\bm t_i^u}f\left(x^{u}\right)+ 
  \frac{\alpha}{2} 
\mathbb{E}_{|\bm t_i^u}\left\|\nabla f\!\left(x^{u} \right)-   \frac{1}{\underset{i \in \mathcal{N}_g}{\sum}|\mathcal{N}_i|}\sum_{i \in \mathcal{N}_g} \!\!\frac{1}{t_{i}^{u}} \sum_{l=0}^{t_{i}^{u}-1}\!\sum_{k \in \mathcal{N}_i}\!\!\nabla F_{i,k}\left(x_{i}^{u,l}\right)\right\|^2 \nonumber \\
  &\hspace{1.5in}~~-\frac{\alpha}{2}\mathbb{E}_{|\bm t_i^u}\left\|\nabla f \left(x^{u}\right)\right\|^2 + \frac{\alpha^2 L |\mathcal{N}_g| \sum_{i \in \mathcal{N}_g} |\mathcal{N}_i|^2\sigma^2}{2\left(\sum_{i \in \mathcal{N}_g}|\mathcal{N}_i|\right)^2}.
\end{align}
Regarding the second term in \eqref{global_smooth_bound_2}, although the division by $t_i^{u}$ fixes the bias issue of the cumulative gradient at the GPS, it does not make it not coincide with its theoretical definition in \eqref{global_loss}. Hence, different from the analogous step in \eqref{eq_nameless} in the proof of Theorem~\ref{CA_Group}, the term above requires more mathematical manipulations. Towards that end, we bound it as follows:
\begin{align} \label{e_bound2}
&\mathbb{E}_{|\bm t_i^u}\left\|\nabla f\left(x^{u}\right)-   \frac{1}{\underset{i \in \mathcal{N}_g}{\sum}|\mathcal{N}_i|}\sum_{i \in \mathcal{N}_g} \frac{1}{t_{i}^{u}} \sum_{l=0}^{t_{i}^{u}-1}\sum_{k \in \mathcal{N}_i}\nabla F_{i,k}\left(x_{i}^{u,l}\right)\right\|^2 \nonumber \\
&\leq \frac{|\mathcal{N}_{g}|}{\left(\sum_{i \in \mathcal{N}_g}|\mathcal{N}_i|\right)^2} \sum_{i \in \mathcal{N}_g} |\mathcal{N}_{i}|   \sum_{k \in \mathcal{N}_i} \mathbb{E}_{|\bm t_i^u}\left\| \nabla F_{i,k}\left(x^{u}\right)
    - \frac{1}{t_{i}^{u}} \sum_{l=0}^{t_{i}^{u}-1}\nabla F_{i,k}\left(x_{i}^{u,l}\right)\right\|^2\nonumber \\ 
    &\leq  \frac{2 L^2 |\mathcal{N}_{g}|}{\left(\sum_{i \in \mathcal{N}_g}|\mathcal{N}_i|\right)^2} \sum_{i \in \mathcal{N}_g} \!|\mathcal{N}_{i}| \sum_{k \in \mathcal{N}_i} \frac{1}{t_i^u}  \sum_{l=0}^{t_i^{u}-1} \mathbb{E}_{|\bm t_i^u} \left\|x^{u}-x_{i}^{u-1,t_{i}^{u-1}}
    \right\|^2+\mathbb{E}_{|\bm t_i^u} \left\|x_{i}^{u-1,t_{i}^{u-1}}-x_{i}^{u,l}\right\|^2.
\end{align}
For the last term above, we have
\begin{align}\label{b_bound}
\hspace{-1.6in}\mathbb{E}_{|\bm t_i^u}\left\|x_{i}^{u-1,t_{i}^{u-1}}-x_{i}^{u,l}\right\|^2
&\leq 2 \mathbb{E}_{|\bm t_i^u}\left\|x_{i}^{u-1,t_{i}^{u-1}}-x^{u}\right\|^2 + 2 \mathbb{E}_{|\bm t_i^u}\left\| x^{u}-x_{i}^{u,l}\right\|^2 \nonumber \\
&\leq 2 \mathbb{E}_{|\bm t_i^u}\left\|x_{i}^{u-1,t_{i}^{u-1}}-x^{u}\right\|^2 +  \frac{2\alpha^2}{|\mathcal{N}_i|}\sum_{k \in \mathcal{N}_i} l^2  G^2.
\end{align}
Next, substituting the bound of (\ref{b_bound}) in (\ref{e_bound2}), 
and then substituting \eqref{e_bound2} in \eqref{global_smooth_bound_2} and rearranging, we get
\vspace{-0.2in}
\begin{align}  \label{global_smooth_bound_3}
    &\mathbb{E}_{|\bm t_i^u}\left\|\nabla f(x^{u})\right\|^2  
\leq  \frac{2}{\alpha} \left(\mathbb{E}_{|\bm t_i^u}f\left(x^{u}\right)-\mathbb{E}_{|\bm t_i^u}f\left(x^{u+1}\right)\right) + \frac{\alpha L |\mathcal{N}_g| \sum_{i \in \mathcal{N}_g} |\mathcal{N}_i|^2\sigma^2}{(\sum_{i \in \mathcal{N}_g}|\mathcal{N}_i|)^2} \nonumber \\
&+ \frac{4 L^2 |\mathcal{N}_{g}|}{\left(\sum_{i \in \mathcal{N}_g}|\mathcal{N}_i|\right)^2} \sum_{i \in \mathcal{N}_g} |\mathcal{N}_{i}| \sum_{k \in \mathcal{N}_i} \frac{1}{t_i^u} \sum_{l=0}^{t_i^{u}-1}  \left(\!3\alpha^2  (t_{i}^{u-1})^2\!+\! \frac{3\alpha^2 |\mathcal{N}_g|}{\left(\sum_{i \in \mathcal{N}_g}|\mathcal{N}_i|\right)^2} \sum_{j \in \mathcal{N}_g}|\mathcal{N}_j|^2G^2+  \alpha^2l^2  G^2 \!\right).
\end{align}
Then, taking the average over global communication rounds $\mathcal{U}$ yields
\begin{align}  
   &\hspace{-.5in}\frac{1}{\mathcal{U}} \mathbb{E}_{|\bm t_i^u}\left\|\nabla f\left(x^{u}\right)\right\|^2  
\leq  \frac{2}{\alpha} \frac{1}{\mathcal{U}}  \left(\mathbb{E}_{|\bm t_i^u}f(x^{1})-\mathbb{E}_{|\bm t_i^u}f\left(x^{\mathcal{U}+1}\right) \right) + \frac{\alpha L |\mathcal{N}_g| \sum_{i \in \mathcal{N}_g} |\mathcal{N}_i|^2\sigma^2}{\left(\sum_{i \in \mathcal{N}_g}|\mathcal{N}_i|\right)^2}\nonumber \\
&\hspace{-.5in}+ \frac{1}{\mathcal{U}} \sum_{u=1}^{\mathcal{U}}  \frac{4 L^2|\mathcal{N}_{g}|}{\left(\sum_{i \in \mathcal{N}_g}|\mathcal{N}_i|\right)^2}\sum_{i \in \mathcal{N}_g} |\mathcal{N}_{i}| \sum_{k \in \mathcal{N}_i} \frac{1}{t_i^u} \sum_{l=0}^{t_i^{u}-1} 3\alpha^2  (t_{i}^{u-1})^2 + \frac{12 \alpha^2 L^2 G^2 |\mathcal{N}_{g}|^2}{\left(\sum_{i \in \mathcal{N}_g}|\mathcal{N}_i|\right)^4}(\sum_{i \in \mathcal{N}_g} |\mathcal{N}_{i}|^2 )^2 \nonumber \\
&\hspace{-.5in}+ \frac{1}{\mathcal{U}} \sum_{u=1}^{\mathcal{U}}\frac{4\alpha^2 L^2  G^2 |\mathcal{N}_{g}|}{\left(\sum_{i \in \mathcal{N}_g}|\mathcal{N}_i|\right)^2} \sum_{i \in \mathcal{N}_g} |\mathcal{N}_{i}| \sum_{k \in \mathcal{N}_i} \frac{1}{t_i^u} \sum_{l=0}^{t_i^{u}-1}  l^2 . 
\end{align}
Direct simplifications of the above expression give the result of the theorem.
\vspace{-0.15in}
\section{Proof Of Corollary ~\ref{corollary}} \label{appE}
\vspace{-0.2in}
By Theorem~\ref{global_convg}, we have shown the convergence rate of the whole setting. Furthermore, bounding the local iteration time, and as a consequence the number of local iterations as stated in \eqref{t_bound1} and \eqref{t_bound2}, one can show that the bound in Theorem~\ref{global_convg} behaves as follows: 
\begin{align} \label{global_bound} 
 \frac{1}{\mathcal{U}} &\sum_{u=1}^{\mathcal{U}} \mathbb{E}_{|\bm t_i^u}\left\|\nabla f(x^{u})\right\|^2  
\leq 
 \frac{2}{\alpha} \frac{1}{\mathcal{U}}  \left(\mathbb{E}f\left(x^{1}\right)-\mathbb{E}f\left(x^{\mathcal{U}+1}\right)\right)  +\frac{12 \alpha^2 L^2 G^2|\mathcal{N}_{g}|^2 }{\left(\sum_{i \in \mathcal{N}_g}|\mathcal{N}_i|\right)^4} ( \sum_{i \in \mathcal{N}_g} |\mathcal{N}_{i}|^2 )^2 \nonumber \\
&+\frac{\alpha L |\mathcal{N}_g| \sum_{i \in \mathcal{N}_g} |\mathcal{N}_i|^2\sigma^2}{\left(\sum_{i \in \mathcal{N}_g}|\mathcal{N}_i|\right)^2}  +\frac{\alpha^2 }{\mathcal{U}} \!\sum_{u=1}^{\mathcal{U}}  \frac{12 L^2 |\mathcal{N}_{g}|}{\left(\sum_{i \in \mathcal{N}_g}|\mathcal{N}_i|\right)^2}\sum_{i \in \mathcal{N}_g} \!|\mathcal{N}_{i}|^2   (t_{i}^{u-1})^2 \nonumber \\
& + \frac{\alpha^2}{\mathcal{U}} \!\sum_{u=1}^{\mathcal{U}}\frac{4 L^2  |\mathcal{N}_{g}|}{\left(\sum_{i \in \mathcal{N}_g}|\mathcal{N}_i|\right)^2}
 \!\sum_{i \in \mathcal{N}_g}   \!\frac{|\mathcal{N}_{i}|^2(t_i^{u}-1)(2t_i^{u}-1) G^2 }{6} \nonumber \\
\leq& \frac{2}{\alpha} \frac{1}{\mathcal{U}}  \left(\mathbb{E}_{|\bm t_i^u}f\left(x^{1}\right)-\mathbb{E}_{|\bm t_i^u}f\left(x^{\mathcal{U}+1}\right)\right)  +\frac{12  L^2 G^2 \alpha^2 |\mathcal{N}_{g}|^2 }{\left(\sum_{i \in \mathcal{N}_g}|\mathcal{N}_i|\right)^4}(\sum_{i \in \mathcal{N}_g} |\mathcal{N}_{i}|^2 )^2 + \frac{\alpha L |\mathcal{N}_g| \sum_{i \in \mathcal{N}_g} |\mathcal{N}_i|^2\sigma^2}{\left(\sum_{i \in \mathcal{N}_g}|\mathcal{N}_i|\right)^2} \nonumber \\
&+  \frac{12 L^2 \alpha^2  |\mathcal{N}_{g}|}{\left(\sum_{i \in \mathcal{N}_g}|\mathcal{N}_i|\right)^2}
 \sum_{i \in \mathcal{N}_g}   (|\mathcal{N}_{i}|t_{i}^{\max})^2
+ \frac{4 L^2 \alpha^2 G^2  |\mathcal{N}_{g}| }{\left(\sum_{i \in \mathcal{N}_g}|\mathcal{N}_i|\right)^2}\sum_{i \in \mathcal{N}_g}   \frac{(|\mathcal{N}_{i}|t_i^{\max})^2}{3}. 
\end{align}
Choosing $\alpha\leq\frac{1}{\sqrt{\mathcal{U}}}$ in the final bound above directly gives the required $\mathcal{O}(\frac{1}{\sqrt{\mathcal{U}}})$ bound.

\ifCLASSOPTIONcaptionsoff
  \newpage
\fi

\bibliographystyle{IEEEtran}
\bibliography{Ali.bib}

\end{document}